\definecolor{DarkBlue}{rgb}{0.1,0.1,0.5}
\definecolor{DarkGreen}{rgb}{0.1,0.5,0.1}
\newcounter{casenum}
\DeclareMathOperator*{\argmax}{arg\,max}
\DeclareMathOperator*{\argmin}{arg\,min}
\newcommand{\precplus}{\mathop{}\preceq_{++}} 
\newcommand{\mcV}{\mathop{}\mathcal{V}} 
\newcommand{\eqx}{\textrm{EQx}}
\newcommand{\eq}{\textrm{EQ}}
\newcommand{\efx}{\textrm{EFx}}
\newcommand{\alloc}{\mathcal{A}}
\newcommand{\cmark}{\ding{51}}%
\newcommand{\xmark}{\ding{55}}%
\newcommand{\lmplus}{\textrm{leximin}$++$}
\newcommand{\NP}{{\rm NP}}
\newcommand{\Part}{\textsc{Partition}}
\newtheorem{theorem}{Theorem}
\newtheorem{definition}{Definition}
\newtheorem{remark}{Remark}
\title{\bfseries Nearly Equitable Allocations Beyond Additivity and Monotonicity}
\author{
Siddharth Barman\thanks{Indian Institute of Science. {\tt barman@iisc.ac.in}} \quad Umang Bhaskar\thanks{Tata Institute of Fundamental Research. {\tt umang@tifr.res.in}} \quad Yeshwant Pandit\thanks{Tata Institute of Fundamental Research. {\tt yeshwant.pandit@tifr.res.in}} \quad Soumyajit Pyne\thanks{Tata Institute of Fundamental Research. {\tt  soumyajit.pyne@tifr.res.in}}}
\date{}
\begin{document}

\maketitle

\begin{abstract}
Equitability (EQ) in fair division requires that items be allocated such that all agents value the bundle they receive equally. With indivisible items, an equitable allocation may not exist, and hence we instead consider a meaningful analog, EQx, that requires equitability up to any item. EQx allocations exist for monotone, additive valuations. However, if (1) the agents' valuations are not additive or (2) the set of indivisible items includes both goods and chores (positively and negatively valued items), then prior to the current work it was not known whether EQx allocations exist or not. 

We study both the existence and efficient computation of EQx allocations. (1) For monotone valuations (not necessarily additive), we show that EQx allocations always exist. Also, for the large class of weakly well-layered valuations, EQx allocations can be found in polynomial time. Further, we prove that approximately EQx allocations can be computed efficiently under general monotone valuations.  (2) For non-monotone valuations, we show that an EQx allocation may not exist, even for two agents with additive valuations. Under some special cases, however, we establish existence and efficient computability of EQx allocations. This includes the case of two agents with additive valuations where each item is either a good or a chore, and there are no mixed items. In addition, we show that, under nonmonotone valuations, determining the existence of EQx allocations is weakly \NP-hard for two agents and strongly \NP-hard for more agents.
\end{abstract}

\section{Introduction}
In the problem of fair division, a central planner (principal) is tasked with \emph{fairly} partitioning a set of items among interested agents. If the items are indivisible, which is our focus, each item must be allocated integrally to an agent. Every agent $i$ has a valuation function $v_i$ that specifies agent $i$'s value for each subset of items. Here, an item $x$ could be a `good', if every agent always values it positively, a `chore', if every agent always values it negatively, or `mixed', if across agents the value for item $x$ can be both positive and negative.

What constitutes a fair allocation of items has no single answer. Over the years, various notions have been studied in depth~\cite{moulin2004fair}. Possibly the most prominent among them are \emph{envy-freeness} and \emph{equitability}. An allocation is said to be envy-free if each agent prefers her own bundle of items to the bundle allocated to anyone else~\cite{foley1966resource}. An allocation is said to be equitable if agents' values for their own bundles are the same and, hence, the agents are equally content~\cite{DubinsS61}. If the agents have identical valuations, then the two notions coincide.

Envy-freeness has received significant attention in classic fair division literature. It is known, for example, that for additive valuations over \emph{divisible} goods, an allocation that maximizes the Nash social welfare (the product of the agents' values for their allocated bundles) is also envy-free \cite{V74equity}. The widely used platform Spliddit.org implements multiple methods to provide solutions for common fair division problems~\cite{GP15spliddit}. The platform uses envy-freeness, in particular, as a fairness criterion for relevant applications.

Notably, equitability is a simpler construct to reason about, since it requires fewer comparisons. To test an allocation for equitability, we only need each agent's value for her own bundle, rather than every agent's value for every bundle.

Perhaps for this reason, equitable solutions are important in practical applications of fair division. Experimental studies have noted that, in specific fair-division settings, users tend to prefer equitable allocations over other notions of fairness ~\cite{HerreinerP09}.\footnote{These prior works refer to equitability as  inequality inversion.} Further, in bargaining experiments, equitability often plays a significant role in determining the outcome~\cite{HerreinerP10}. Also in Spliddit.org, for fairly dividing rent among housemates, equitability was noted to be important as a refining  criterion, following envy-freeness. The latest rent-division algorithm used in Spliddit.org computes solutions that satisfy this supporting objective~\cite{GalMPZ17}.\footnote{Specifically, rent divisions that maximize the minimum value, called maximin solutions, are nearly equitable. The latest algorithm implemented in Spliddit.org finds rent divisions that satisfy envy-freeness and are also maximin.}

The real-world significance of equitability is further highlighted by considering divorce settlements. The two legal means of dividing property in the United States are \emph{community property} and \emph{equitable distribution}~\cite{Divorce}. In the community property rule, holdings are divided equally among the divorcing couple and, hence, the rule induces an envy-free division. Equitable distribution, on the other hand, takes into account various factors (such as the employability and financial needs of each party) for dividing the assets, and, in particular, makes inter-personal comparisons of value. Most states in the US follow equitable distribution, i.e., the courts divide assets and liabilities based on equitability. The definition of equitability in this situation is somewhat intuitive. However, the evidence clearly suggests that equitability is an important concept in practical situations. 

Motivated by such considerations, the current work studies equitability (\eq), with the focus on allocating indivisible items. In the discrete fair division context, simple examples (with two agents and a single indivisible item) demonstrate that exactly equitable allocations may not exist. Hence, recent works have focused on relaxations. A compellingly strong relaxation is obtained by requiring that any existing inequality in agents' values is switched by the (hypothetical) removal of any good or chore in an appropriate manner. This relaxation is called equitability up to any item, \eqx{}. Specifically, an allocation is said to be \eqx{} if, whenever agent $i$ has a lower value for her bundle, say $A_i$, than some other agent $j$ for her bundle, say $A_j$ (i.e., $v_i(A_i) < v_j(A_j)$), the removal of any positively-valued item (good) from $A_j$ or the removal of any negatively-valued item (chore) from $A_i$ ensures that the inequality is (weakly) reversed. Similarly, an allocation is \efx{} if, whenever agent $i$ has higher value for agent $j$'s bundle than her own (i.e., $v_i(A_i) < v_i(A_j)$), this preference can be weakly reversed by removing a good from $A_j$ or a chore from $A_i$.

Given the relevance of equitability, a fundamental question in discrete fair division is whether \eqx{} allocations always exist. For the specific case of additively valued goods, \eqx{} allocations are known to exist; this result is obtained via a greedy algorithm~\cite{GMT14near}.\footnote{\eqx{} is termed as \emph{near jealousy-freeness} in \cite{GMT14near}.}  Beyond this setting, however, this question has not been addressed in the literature. Our work addresses this notable gap. In particular, moving beyond monotone additive valuations, the current work establishes novel existential and algorithmic guarantees for \eqx{} allocations. For general monotone valuations, for example, our work shows the universal existence of \eqx{} allocations.

\subsection{Our Results and Techniques} 
\begin{table*}[t]
\centering
\begin{tabular}{ | p{3.4cm} || p{5.5cm}| p{6cm} | }
\hline 
 & \qquad \qquad Existence & \qquad \quad Polynomial-Time Algorithm  \\
 \hline \hline
 Monotone, Beyond &   {\large \cmark} Theorem \ref{thm:Monotone} &   {\large \cmark} Weakly Well-Layered: Theorem \ref{thm:WWLPolyTime} \\ 
Additive (Section \ref{sec:Monotone})  & &  {\large \cmark} Approximate \eqx{}: Theorem \ref{thm:MonotoneApproximation} \\
& & \\
 \hline
Nonmonotone Additive (Section \ref{sec:Chores}) &   {\large \xmark}  Subjective Valuations: Theorem \ref{theorem:no-mas-eqx} &   {\large \cmark} Two agents, Objective Valuations:   Theorem~\ref{theorem:two-agents} \\ 
 \ &   {\large \cmark} Identical Chores: Theorem \ref{thm:LeximinIsEQx} & {\large \xmark} Two agents, Subjective Valuations: Theorem~\ref{theorem:no-mas-eqx} (weakly \NP-hard)\\ 
 \ &  {\large \cmark} Single Chore: Theorem~\ref{thm:SingleChore}  & {\large \cmark} Constant number of agents: Theorem~\ref{theorem:PseudoPoly} (pseudo-polynomial time algorithm)\\
& & {\large \xmark} Arbitrary number of agents: Theorem~\ref{theorem:strongNP-hard} (strongly \NP-hard)\\
\hline
\end{tabular}
\caption{Our Results}
\label{table:results}
\end{table*}

Under monotone, nondecreasing valuations, we establish (in Section \ref{sec:Monotone}): 

\begin{itemize}
\item \eqx{} allocations always exist, and can be computed in pseudo-polynomial time (Theorem \ref{thm:Monotone}). Our algorithm is based on a modification of an Add-and-Fix algorithm, proposed earlier for \efx{} under \emph{identical} monotone nonincreasing valuations~\cite{BarmanNV23}.

\item Under \emph{weakly well-layered} valuations, \eqx{} allocations can be computed in polynomial time (Theorem \ref{thm:WWLPolyTime}). Weakly well-layered functions are an encompassing and natural class of valuations that include gross substitutes, weighted matroid-rank functions, budget-additive, well-layered, and cancelable valuations~\cite{GoldbergHH23}. 

\item For any $\varepsilon \in (0,1)$, a $(1 - \varepsilon)$-approximately \eqx{} allocation can be computed in time polynomial in $1/\varepsilon$ and the input size (Theorem \ref{thm:MonotoneApproximation}). 
\end{itemize}

Finding an \efx{} allocation is known to be hard~\cite{PR20almost}. Since the negative result holds even for two agents, with {\it identical} submodular valuations, the hardness also applies to \eqx{}. This observation signifies that our polynomial-time algorithm for weakly well-layered valuations is the best possible, in the sense that such a positive result is unlikely for submodular valuations in general. 

We then address nonmonotone additive valuations (Section \ref{sec:Chores}). In comparison to monotone valuations, the results here are mixed and highlight a complicated landscape.
\begin{itemize}
\item Existential guarantees: 
\begin{itemize}
\item For agents with subjective valuations---when mixed items are allowed---\eqx{} allocations may not exist, even for just two agents with normalized valuations (Theorem \ref{theorem:no-mas-eqx}).\footnote{Valuations are \emph{normalised} if, for the set $M$ of all the goods,  $v_i(M)$ is equal for all agents $i$.} This negative result necessitates focusing on objective valuations, wherein each item is exclusively a good or a chore. 

\item For $n$ agents with additive objective valuations where each chore $c$ has the same value for the agents (i.e., $v_i(c) = v_j(c)$ for all agents $i$ and $j$), we show---using the \lmplus{} ordering---that \eqx{} allocations always exist (Theorem \ref{thm:LeximinIsEQx}). 

\item For $n$ agents with additive objective valuations and a single chore, we show that \eqx{} allocations exist (Theorem~\ref{thm:SingleChore}).
\end{itemize}

The last result (Theorem~\ref{thm:SingleChore}) is technically challenging. It is obtained via a local search algorithm, that resolves \eqx{} violations, whenever they arise, in a specific order. The analysis is subtle and needs to keep track of multiple progress measures, including the \lmplus{} value.

\item Computational results: \begin{itemize}
\item For two agents with additive objective valuations, we develop a polynomial-time algorithm for finding \eqx{} allocations (Theorem~\ref{theorem:two-agents}). We note that for agents with \emph{identical} additive valuations, Aziz and Rey \cite{AzizR20} develop an efficient algorithm for finding \efx{} (and, hence, \eqx{}) allocations. By the well-known cut-and-choose protocol, this gives an \efx{} algorithm for two nonidentical agents. However, cut-and-choose does not work for \eqx{} and, hence, prior to our result there was no known algorithm for finding \eqx{} allocation among two nonidentical agents.  

Many instances of fair division, in fact, consist of just two agents (such as in divorce settlement, or in the experiments of Gal et al.~\cite{GalMPZ17}), hence our result for two agents is of practical significance.

\item For constant number of agents with nonmonotone, additive, subjective valuations, we provide a pseudo-polynomial time algorithm (Theorem \ref{theorem:PseudoPoly}).

\item For two agents with nonmonotone, additive, subjective valuations it is weakly \NP-hard to determine whether an \eqx{} allocation exists or not (Theorem \ref{theorem:no-mas-eqx}).

\item For arbitrary number of agents with nonmonotone, additive, subjective valuations determining whether there exists an \eqx{} allocation is strongly \NP-hard (Theorem \ref{theorem:strongNP-hard}).

\end{itemize}
\end{itemize}

Table \ref{table:results} summarizes our results. The existence of \eqx{} allocation under objective, additive valuations stands as an interesting, open question.

\subsection{Additional Related Work} 
Equitability was first studied in the divisible setting of cake cutting. Here, equitable allocations for $n$ agents exist ~\cite{DubinsS61,CechlarovaDP13,Cheze17} and, in particular, do not require additivity~\cite{AumannD15}. However, no finite algorithm can find an exactly equitable allocation~\cite{ProcacciaW17}, though approximately equitable allocations can be computed in near-linear time~\cite{CechlarovaP12}.

For indivisible items, \eqx{} allocations were first shown to exist for monotone additive valuations~\cite{GourvesMT14,FSV+20equitable}. This existential guarantee was obtained via an efficient greedy algorithm. For additive valuations that are strictly positive, allocations that are both \eqx{} and Pareto optimal are known to exist~\cite{FSV+19equitable}.

When the agents have identical valuations, \efx{} and \eqx{} allocations coincide. Hence, under identical valuations, existential guarantees obtained for \efx{} allocations extend to \eqx{} as well. In particular, for identical monotone (nondecreasing) valuations, \efx{} allocations were shown to exist using the \lmplus{} construct  ~\cite{PR20almost}. This work also showed that even for two agents with identical submodular valuations, finding an \efx{} allocation requires an exponential number of queries. This problem is also {\rm PLS}-complete~\cite{GoldbergHH23}. For objective identical valuations, the existence of \efx{} allocations was shown through a modification of the leximin construct~\cite{ChenL20}. For additive identical valuations, \efx{} existence, as well as efficient computation, was obtained by Aziz and Rey~\cite{AzizR20}. 

Finally, a number of recent papers have also studied the loss of efficiency for equitable and near equitable allocations~\cite{CKK+12efficiency,AumannD15,FSV+19equitable,FSV+20equitable,SunCD23,BhaskarMSV23}.

\section{Notation and Preliminaries}
A fair division instance $(N,M,\mcV)$ consists of a set $N = \{1, 2, \ldots, n\}$ of agents, a set $M$ of indivisible items, with $m = |M|$, and a valuation function $v_i \in \mcV$ for each agent $i \in N$. The valuation $v_i: 2^M \rightarrow \mathbb{Z}$ specifies agent $i$'s value for every subset of the items. We will assume, throughout, that $v_i(\emptyset) = 0$, and all the agents' values are integral.\footnote{The integrality assumption holds without loss of generality for rational values, since we can multiply the rational numbers by the product of their denominators to get integral ones. Note that under such a scaling, the size of the input only increases polynomially.}

For notational convenience, for single items $x \in M$, we use $v_i(x)$ and $v_i(S\cup x)$ to denote $v_i(\{x\})$ and $v_i(S \cup \{x\})$, respectively. Let $V_{\max} := \max_{i \in [n]} v_i(M)$. Given a valuation $v$ and a subset $S \subseteq M$ of items, we say an item $x \in S$ is a \emph{good} if $v(S \cup x) \ge v(S)$, and is a \emph{chore} otherwise. Note that whether an item is a good or a chore depends on the current allocation. If $v(S \cup x) \ge v(S)$ for all $S \subseteq M$, we will say that item $x$ is a good (or a chore, if all inequalities are weakly reversed and some inequality is strict).

A valuation $v$ is monotone nondecreasing if $v(S \cup x) \geq v(S)$ for all subsets $S \subseteq M$ and all items $x \in M$. In this case, all items are goods for this valuation. Similarly, valuation $v$ is monotone nonincreasing if $v(S \cup x) \le v(S)$ for all $S \subseteq M$ and $x \in M$ (and all items are chores in this case). Function $v$ is additive if $v(S) = \sum_{x \in S} v(x)$, for all subsets $S \subseteq M$. We will say that a fair division instance has objective valuations if for each item $x \in M$ (i) either $v_i(S \cup x) \ge v_i(S)$ for all agents $i$ and subsets $S \subseteq M$, (ii) or $v_i(S \cup x) \le v_i(S)$ for all agents $i$ and subsets $S \subseteq M$. Under (i), the item $x$ is a good, and if case (ii) holds (and the inequality is strict for some agent $i$ and subset $S$), then item $x$ is a chore. We will mainly address objective valuations and, hence, every item is unequivocally either a good or a chore.\footnote{The key exception here is the example given in Theorem~\ref{theorem:no-mas-eqx}, which shows that if the valuations are not objective, then \eqx{} allocations may fail to exist.} We will typically use $g$ to denote a good  and $c$ to denote a chore. Under objective valuations $v_i$, for each good $g$ the value $v_i(g) \geq 0$ and for each chore $c$ we have $v_i(c) \leq 0$.

An allocation $\alloc := (A_1, \ldots, A_n)$ is a partition of items $M$ into $n$ pairwise disjoint subsets. Here, subset of items $A_i \subseteq M$ is assigned to agent $i$ (also called agent $i$'s bundle). At times (such as when analyzing the interim allocations obtained by our algorithms), we may also consider partial allocations wherein not all items are assigned among the agents, i.e., for the pairwise disjoint bundles we have $\cup_{i=1}^n A_i \subsetneq M$. Given an allocation $\alloc$, we say an agent $p$ is poorest if $v_p(A_p) = \min_{i \in N} v_i(A_i)$, and agent $r$ is richest if $v_r(A_r) = \max_{i \in N} v_i(A_i)$.  

\paragraph*{Equitability and Envy-Freeness.} An allocation $\alloc$ is equitable if $v_i(A_i) = v_j(A_j)$ for all agents $i, j \in N$. An allocation is said to be envy-free if $v_i(A_i) \ge v_i(A_j)$ for all agents $i, j \in N$. Hence, equity requires that all agents have equal value, while envy-freeness requires that each agent values her bundle more than that of any other agent. As mentioned, even in simple instances (with a single indivisible good and two agents), both equitable and envy-free allocations do not exist. Hence, we consider relaxations of these notions.

Specifically, an allocation $\alloc=(A_1, \ldots, A_n)$ is said to be equitable up to any item (\eqx{}) if \\
\noindent 
({\tt 1}) For every pair of agents $i$, $j$ and for each good $g \in A_j$, we have $v_j(A_j \setminus \{ g \}) \leq v_i(A_i)$, and \\
({\tt 2}) For every pair of agents $i$, $j$ and for each chore $c \in A_i$, we have $v_i(A_i \setminus\{c\}) \geq v_j(A_j)$. 

That is, in an \eqx{} allocation, for each agent, the removal of any good assigned to her makes her a poorest agent, and the removal of any chore assigned to her must make her a richest agent. \\

Allocation $\alloc=(A_1, \ldots, A_n)$ is said to be envy-free up to any item (\efx{}) if, for all agents $i, j \in N$, we have $v_i(A_i) \ge v_i(A_j  \setminus \{ g\})$ for each good $g \in A_j$ and $v_i(A_i \setminus \{ c\}) \ge v_i(A_j)$ for each chore $c \in A_i$. Hence, from the perspective of agent $i$, the removal of any good from another agent's bundle should make it less valuable than $i$'s own bundle, and the removal of any chore from agent $i$'s bundle should make it more valuable than any other bundle.

Our results for monotone nondecreasing valuations (when all items are goods) are detailed in Section~\ref{sec:Monotone}. Section~\ref{sec:Chores} presents our results for nonmonotone additive valuations.

\section{Monotone Valuations}
\label{sec:Monotone}

In this section, all items have nonnegative marginal values. That is, $v_i(S \cup x) \ge v_i(S)$ for all items $x \in M$, agents $i \in N$, and subsets $S \subseteq M$. Hence, all items are goods in this section. We assume a standard value-oracle access to the valuations, that given any agent $i \in N$ and subset $S \subseteq M$, returns $v_i(S) \in \mathbb{Z}_{\ge 0}$. 

For monotone valuations, we establish strong positive results towards the existence of \eqx{} allocations. Our primary result shows that, for general monotone valuations, \eqx{} allocations \emph{always} exist, and for a broad class of valuations---termed \emph{weakly well-layered} valuations---such fair allocations can be found in polynomial time. 

\begin{definition}[\cite{GoldbergHH23}]
\label{definition:well-layered}
    A valuation function $v:2^M \rightarrow \mathbb{Z}_{\ge 0}$ is said to be \emph{weakly well-layered} if for any set $M' \subseteq M$ the sets $S_0$, $S_1$, $\ldots$ obtained by the greedy algorithm (that is, $S_0 = \emptyset$ and $S_i = S_{i-1} \cup \{x_i\}$, where $x_i \in \argmax_{x \in M' \setminus S_{i-1}} v(S_{i-1} \cup x)$, for $i \le |M'|$) are optimal, in the sense that $v(S_i) = \max_{S \subseteq M': |S| = i} v(S)$ for all $i \le |M'|$.
\end{definition}

Weakly well-layered valuations intuitively capture valuations where optimal sets can be obtained via a greedy algorithm. Several interesting and widely-studied classes of valuations are weakly well-layered, including gross substitutes (which include weighted matroid rank functions), budget-additive, well-layered, and cancelable valuations; Figure 1 in ~\cite{GoldbergHH23} is helpful in visualizing the relation between these classes. Notably, submodular functions are {\it not} weakly well-layered. This is an affirming observation, since it is known that, even for two agents with identical submodular valuations, obtaining an \efx{} (and, hence, \eqx{}) allocation is {\rm PLS}-complete \emph{and} requires exponentially many value queries. 



Our results are obtained through a modification and careful analysis of the Add-and-Fix algorithm, earlier used for obtaining \efx{} under {\it identical}  cost functions (i.e., when all items are chores)~\cite{BarmanNV23}. 

The modified version of Add-and-Fix uses a greedy selection criterion; see Algorithm \ref{alg:AddandFix2}.  In each iteration, the algorithm identifies a poorest agent $p \in N$ in the current allocation. Then, agent $p$ selects goods greedily from the unassigned ones. That is, from among the unassigned goods, $p$ iteratively selects goods with \emph{maximum marginal value}, until it is no longer the poorest agent. This is the Add phase in the algorithm (Lines \ref{line:AddCondition} to \ref{line:add-end}). If after adding these goods, the allocation obtained is {not} \eqx{}, this must be because of the goods assigned to agent $p$. In the Fix phase (Lines \ref{line:FixCondition} to \ref{line:fix-end}), violating goods are iteratively removed from agent $p$'s bundle, until the allocation is \eqx{}. 

After each iteration, either the value of agent $p$ increases, or the algorithm terminates (Claim \ref{claim:AddandFixIteration}). We differ from the original Add-and-Fix in two aspects: the original algorithm chose a richest agent in each iteration, since it dealt with chores, whereas we select a poorest agent. Secondly, and crucially for our results for efficient computation, the original algorithm assigned an arbitrary chore to the richest agent, while we select goods with maximum marginal contribution to the poorest agent. Also, note that the \efx{} guarantee in \cite{BarmanNV23} is obtained for identical costs functions, whereas the \eqx{} result here holds for (monotone) non-identical valuations. 

\begin{algorithm}[ht]
\caption{Greedy Add-and-Fix}
\label{alg:AddandFix2}
\textbf{Input}: Fair division instance $(N,M,\mcV)$ with value-oracle access to monotone, nondecreasing valuations.\\
\textbf{Output}: \eqx{} allocation $\alloc$.
\begin{algorithmic}[1] 
\STATE Initialize bundles $A_i = \emptyset$ for all agents $i$, and initialize $U = M$ as the set of unassigned goods.
\WHILE{$U \neq \emptyset$}
	\STATE Let ${\displaystyle p \in \argmin_{i \in N}  v_i(A_i)}$ and ${\displaystyle p' \in \argmin_{i \in N \setminus \{ p \}}  v_i(A_i)}$. \\
	\COMMENT{$p$ and $p'$ are a `poorest' and 'second poorest' agent, respectively.} \\ 
	\COMMENT{{Add Phase}: Lines \ref{line:AddCondition} to \ref{line:add-end}}
	\WHILE{$v_p(A_p) \le v_{p'}(A_{p'})$ \AND $U \neq \emptyset$} \label{line:AddCondition}
		\STATE Let $g^* \in \argmax_{g \in U} (v_p(A_p \cup g) - v_p(A_p))$.
		\STATE Update $A_p \gets A_p \cup \{g^*\}$ and $U \gets U \setminus \{g^*\}$. 
	\ENDWHILE \label{line:add-end}
	
	\COMMENT{{Fix Phase}: Lines \ref{line:FixCondition} to \ref{line:fix-end}}
	\WHILE{there exists $\widehat{g} \in A_p$ such that $v_p(A_p \setminus \{ \widehat{g} \}) > v_{p'}(A_{p'})$} \label{line:FixCondition}
		\STATE Update $A_p \gets A_p \setminus \{\widehat{g}\}$ and $U \gets U \cup \{\widehat{g}\}$. 
	\ENDWHILE \label{line:fix-end}
\ENDWHILE
\RETURN {Allocation $\alloc = (A_1, \ldots, A_n)$}.
\end{algorithmic}
\end{algorithm}

In our proofs, an iteration of the outer while-loop is called an outer iteration. Note that in every outer iteration, the allocation to every agent, other than $p$, remains unchanged. We will use the following claim. 

\begin{restatable}{claim}{ClaimAddandFixIt}
After each outer iteration, (i) either the value of the selected agent $p$ strictly increases such that $p$ is no longer the poorest agent (and the values of the other agents remain unchanged), or (ii) all the remaining unassigned goods are allocated to agent $p$ and the algorithm terminates.
\label{claim:AddandFixIteration}
\end{restatable}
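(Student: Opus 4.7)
The plan is a case analysis on how the inner Add Phase exits in a given outer iteration. First I observe that across one outer iteration only agent $p$'s bundle changes (both the Add and Fix phases only touch $A_p$), so the values of all other agents are automatically preserved, and in particular $v_{p'}(A_{p'})$ stays fixed throughout the iteration. Since $p$ is chosen to be a poorest agent at the start of the iteration, the opening inequality $v_p(A_p) \le v_{p'}(A_{p'})$ holds and the Add Phase's entry condition is satisfied; hence the Add Phase runs at least one step (the outer-loop guard already ensures $U \neq \emptyset$ at this point).

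The Add Phase can terminate in exactly two ways: either (a) some addition causes the guard $v_p(A_p) \le v_{p'}(A_{p'})$ to fail, leaving $v_p(A_p) > v_{p'}(A_{p'})$, or (b) $U$ becomes empty while $v_p(A_p) \le v_{p'}(A_{p'})$ still holds. In case (a), I will argue that the subsequent Fix Phase preserves the strict inequality $v_p(A_p) > v_{p'}(A_{p'})$: the Fix loop removes a good $\widehat{g}$ from $A_p$ only when $v_p(A_p \setminus \{\widehat{g}\}) > v_{p'}(A_{p'})$, so right after every removal we still have $v_p(A_p) > v_{p'}(A_{p'})$; this invariant therefore also holds at the Fix Phase's natural exit. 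Consequently, at the end of the iteration $v_p(A_p)$ has strictly risen (from $\le v_{p'}(A_{p'})$ to $> v_{p'}(A_{p'})$) and $p$ is strictly richer than $p'$, so $p$ is no longer a poorest agent; this yields conclusion~(i).

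In case (b), monotonicity immediately implies that the Fix Phase does nothing: for every $g \in A_p$ we have $v_p(A_p \setminus \{g\}) \le v_p(A_p) \le v_{p'}(A_{p'})$, so no good satisfies the Fix guard. Therefore $U$ remains empty at the close of the outer iteration, the outer while-loop exits, and the algorithm terminates with all of the originally unassigned goods now sitting in $A_p$; this yields conclusion~(ii).

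I do not expect any serious obstacle: the whole argument is bookkeeping around the two guard conditions together with monotonicity. The only delicate point is observing that the Fix guard is \emph{exactly} what is needed to maintain the strict inequality $v_p(A_p) > v_{p'}(A_{p'})$ established at the end of case (a), so that the Fix Phase cannot undo the gain secured by the Add Phase; this is what makes the dichotomy between (i) and (ii) clean.
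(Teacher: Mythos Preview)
Your proposal is correct and follows essentially the same approach as the paper: a case split on how the Add Phase exits, with the Fix Phase invariant $v_p(A_p) > v_{p'}(A_{p'})$ handling case (a) and the observation that the Fix Phase is vacuous handling case (b). Your version is slightly more explicit than the paper's (you spell out the use of monotonicity in case (b) and the fact that only $A_p$ changes), but the logic is the same.
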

\begin{proof}
At the beginning of the outer iteration, $v_p(A_p) \le v_{p'}(A_{p'})$. The Add phase stops only when either all remaining goods are allocated to agent $p$ or $v_p(A_p) > v_{p'}(A_{p'})$. Hence, at the end of the Add phase either $v_p(A_p) > v_{p'}(A_{p'})$, or all remaining goods are assigned to agent $p$ and $v_p(A_p) \le v_{p'}(A_{p})$. In the latter case, the Fix phase is not executed, and the algorithm terminates as claimed. In the former case, the Fix phase is executed. Note, however, that a good $\widehat{g}$ is removed from agent $p$ only if $v_p(A_p \setminus \{ \widehat{g} \}) > v_{p'}(A_{p'})$. Hence, after termination of the Fix phase, we continue to have $v_p(A_p) > v_{p'}(A_{p'})$. Therefore, the value of agent $p$ has strictly increased to above that of agent $p'$, as claimed.
\end{proof}

\begin{theorem}
Given any fair division instance with monotone valuations, Algorithm~\ref{alg:AddandFix2} computes an \eqx{} allocation in pseudo-polynomial time. \label{thm:Monotone}
\end{theorem}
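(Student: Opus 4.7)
The plan is to verify two things about Algorithm~\ref{alg:AddandFix2}: that it terminates in pseudo-polynomial time, and that the returned allocation satisfies \eqx{}. Claim~\ref{claim:AddandFixIteration} does most of the work; I will use it as a black box.

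For the running-time bound, I would take as potential $\Phi(\alloc) := \sum_{i \in N} v_i(A_i)$. By Claim~\ref{claim:AddandFixIteration}, in every outer iteration either the value of the selected agent $p$ strictly increases (and every other $v_i(A_i)$ is unchanged), or the algorithm halts. Because all values are integer, the per-iteration increase is at least one, and since $\Phi(\alloc) \le n V_{\max}$ at all times, the outer loop runs at most $n V_{\max} + 1$ times. Within each outer iteration, Add performs at most $m$ additions and Fix at most $m$ removals, each requiring $O(1)$ value-oracle calls. This yields an overall $\mathrm{poly}(n,m,V_{\max})$ bound.

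For correctness, fix the final allocation $\alloc$ and consider an arbitrary agent $j$. Let $t_j$ be the last outer iteration in which $j$'s bundle was modified; if no such iteration exists then $A_j = \emptyset$ and \eqx{} for $j$ holds vacuously. Otherwise, $j$ was selected as the poorest agent in iteration $t_j$. Write $A_i^{t_j}$ for the bundle of $i$ at the end of iteration $t_j$. The key invariant I would extract from the Fix phase is: for every good $g \in A_j^{t_j}$,
\[
v_j(A_j^{t_j} \setminus \{g\}) \,\le\, v_{p'}(A_{p'}^{t_j}),
\]
where $p'$ is the second-poorest agent at iteration $t_j$. This holds either because Fix runs to completion, or because iteration $t_j$ ends in case (ii) of Claim~\ref{claim:AddandFixIteration}, in which case $v_j(A_j^{t_j}) \le v_{p'}(A_{p'}^{t_j})$ and the monotonicity of $v_j$ gives the same bound. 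Since $p'$ is second-poorest, $v_{p'}(A_{p'}^{t_j}) \le v_i(A_i^{t_j})$ for every $i \ne j$, so $v_j(A_j^{t_j} \setminus \{g\}) \le v_i(A_i^{t_j})$ for every $g \in A_j^{t_j}$ and every $i \ne j$.

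To transfer this snapshot statement to the final allocation, I observe that $A_j = A_j^{t_j}$ by the choice of $t_j$, while for each $i \ne j$ the value $v_i(A_i)$ is non-decreasing across iterations (again by Claim~\ref{claim:AddandFixIteration}, since $i$'s bundle only changes when $i$ is the selected poorest agent, in which case its value strictly increases). Chaining these inequalities,
\[
v_j(A_j \setminus \{g\}) \;=\; v_j(A_j^{t_j} \setminus \{g\}) \;\le\; v_i(A_i^{t_j}) \;\le\; v_i(A_i),
\]
for every $g \in A_j$ and $i \ne j$, which is exactly \eqx{} for agent $j$. The main obstacle is being careful with the bookkeeping across iterations, in particular ensuring the Fix-phase invariant is still applicable in the edge case where $t_j$ is the terminating iteration (which is handled by splitting into the two sub-cases of Claim~\ref{claim:AddandFixIteration} as above).
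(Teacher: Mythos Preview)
Your proof is correct, but it follows a different decomposition from the paper's. The paper proves correctness by \emph{forward induction}: it maintains as an invariant that the partial allocation is \eqx{} after every outer iteration, verifying in each step that (i) agents $i \neq p$ still satisfy \eqx{} because their bundles are unchanged and $p$'s value has only gone up, and (ii) the Fix phase directly enforces the \eqx{} condition for $p$ against the second-poorest $p'$. Your approach is instead a \emph{last-touch} argument: for each agent $j$ you look at the final iteration $t_j$ in which $j$'s bundle changed, extract the Fix-phase inequality at that moment, and then transport it to the final allocation using the fact that every other agent's value is monotone nondecreasing across iterations. Both arguments rely on the same two ingredients (the Fix-phase guarantee and value monotonicity from Claim~\ref{claim:AddandFixIteration}), so neither is really more powerful; the paper's invariant proof additionally tells you that every intermediate partial allocation is \eqx{}, while your argument is arguably more direct if one only cares about the final output.

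One small inaccuracy in your running-time bound: each Add step does not take $O(1)$ oracle calls, since selecting $g^* \in \argmax_{g \in U} v_p(A_p \cup g)$ requires scanning $U$ and hence $O(m)$ queries (similarly for checking the Fix condition). This does not affect your conclusion, since the per-iteration cost is still $O(m^2)$ and the overall bound remains $\mathrm{poly}(n,m,V_{\max})$; the paper states the same $O(m^2 n V_{\max})$ bound.
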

\begin{proof}
We first show that the algorithm terminates in pseudo-polynomial time.  Claim~\ref{claim:AddandFixIteration} implies that the number of outer iterations is at most $\sum_{i \in N} v_i(M)$ $\le n V_{\max}$, where $V_{\max} := \max_i v_i(M)$.  Each outer iteration consists of an Add phase (in which at most $m$ goods are included in agent $p$'s bundle) and a Fix phase (in which at most $m$ goods are removed from agent $p$'s bundle). Each execution of these phases  requires at most $m$ calls to the value oracle to find the required goods $g^*$ and $\widehat{g}$. Hence, it follows that the algorithm terminates in $O(m^2 n V_{\max})$ time.

Next, we show that the allocation computed by the algorithm is indeed \eqx{}; our proof is via induction on the number of outer iterations. Initially, the allocation is empty, which is trivially \eqx{}. 

For the inductive step, fix any outer iteration and let $p$ be the poorest agent selected in that iteration. Write $\alloc=(A_1, \ldots, A_n)$ for the allocation at the beginning of the outer iteration and $\mathcal{B}=(B_1, \ldots, B_n)$ for the allocation obtained after the outer iteration. Note that $B_i = A_i$ for all agents $i \neq p$. This observation and the induction hypothesis imply that any \eqx{} violation must involve agent $p$. Further, Claim~\ref{claim:AddandFixIteration} gives us $v_p(B_p) \ge v_p(A_p)$.

To show that allocation $\mathcal{B}$ is \eqx{}, we need to show that for any agent $i \in N$, the removal of any good $g \in B_i$ makes $i$ a poorest agent. This condition holds---via the induction hypothesis---for all agents $i \neq p$; recall that $B_i = A_i$ for all $i \neq p$, and $v_p(B_p) \ge v_p(A_p)$. 

For agent $p$, note that after the completion of the Fix phase, the removal of any good from $p$'s bundle reduces its value to at most $v_{p'}(A_{p'}) = v_{p'}(B_{p'})$. Since $p$ was the poorest and $p'$ was the second poorest agent in allocation $\alloc$, this implies that the removal of any good from $B_p$ would make agent $p$ the poorest agent in $\mathcal{B}$ as well: $v_p(B_p \setminus \{ g\}) \leq v_j(B_j)$ for all $j \in N$ and each good $g \in B_p$. 

The theorem stands proved. 
\end{proof}

\subsection{Weakly Well-Layered Valuations}
\label{section:MonotoneWWL}
The following theorem asserts that for weakly well-layered valuations, Algorithm~\ref{alg:AddandFix2} computes an \eqx{} allocation in polynomial time.
\begin{theorem}
\label{thm:WWLPolyTime}
Given any fair division instance in which all the agents have monotone, weakly well-layered valuations, Algorithm~\ref{alg:AddandFix2} computes an \eqx{} allocation in polynomial time.
\end{theorem}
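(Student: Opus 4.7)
The correctness of the returned allocation is already established by Theorem~\ref{thm:Monotone}, since weakly well-layered valuations form a subclass of monotone valuations. The task here is only to replace the pseudo-polynomial runtime bound of $O(m^2 n V_{\max})$ by a polynomial (in $m$ and $n$) bound. The plan is to show that the number of outer iterations is at most $m$. Since each iteration performs $O(m)$ value-oracle queries in total across the Add and Fix phases, this would yield an overall $O(m^2)$ runtime.

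To bound the number of outer iterations by $m$, the main step is to argue that, for weakly well-layered valuations, the Fix phase never removes any item. If this is so, then every good transferred from $U$ to the poorest agent $p$ in the Add phase stays there permanently, the total number of assigned items $|M \setminus U|$ strictly increases in each outer iteration, and the algorithm terminates after at most $m$ iterations.

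The argument that Fix never triggers will hinge on the defining property of weakly well-layered valuations: greedy selection starting from the empty set yields a maximum-value subset of each cardinality. Let $A_p$ denote the bundle at the end of Add, and let $g_1, \ldots, g_k$ be the items added, in order, during this Add phase. Because Add picks the maximum-marginal item at each step and terminates precisely when the value first strictly exceeds that of the second-poorest agent, we have $v_p(A_p \setminus \{g_k\}) \leq v_{p'}(A_{p'}) < v_p(A_p)$. The weakly well-layered structure will then be used to show $\max_{\widehat{g} \in A_p} v_p(A_p \setminus \{\widehat{g}\}) = v_p(A_p \setminus \{g_k\})$, so that no $\widehat{g} \in A_p$ can satisfy the Fix removal condition $v_p(A_p \setminus \{\widehat{g}\}) > v_{p'}(A_{p'})$.

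The main technical obstacle is that the Add phase builds on a non-empty starting bundle $A_p^{(0)}$ inherited from earlier iterations, whereas Definition~\ref{definition:well-layered} is phrased for greedy from the empty set. I expect to handle this by an induction on outer iterations, maintaining an invariant that each agent's current bundle is itself a greedy-optimal subset (of its own cardinality) of the items that were available to that agent at the iteration in which it was selected. Establishing this invariant and then applying the weakly well-layered property to the combined set $A_p \cup U$ - so as to conclude that the cardinality-$(|A_p|-1)$ maximizer within $A_p$ is exactly $A_p \setminus \{g_k\}$ - is the crux of the proof.
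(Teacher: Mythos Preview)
Your proposal is correct and follows essentially the same route as the paper: prove by induction on outer iterations that the Fix phase never executes, using the induction hypothesis to guarantee that the unassigned sets are nested and hence that the entire history of additions to $A_p$ forms a valid greedy sequence over $A_p \cup \overline{U}$; the weakly well-layered property then yields that $A_p$ minus the last-added good maximizes $v_p$ among size-$(|A_p|-1)$ subsets, so no Fix removal can trigger. One small accounting slip: a single outer iteration need not use only $O(m)$ queries (the Add phase may insert several goods, each costing $O(m)$ queries to select), but since Fix never runs the total number of insertions over all iterations is $m$, and the amortized bound of $O(m^2)$ you state is correct.
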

\begin{proof}
The monotonicity of agents's valuations ensures that the allocation returned by Algorithm~\ref{alg:AddandFix2} is \eqx{} (Theorem \ref{thm:Monotone}). Hence, it remains to prove that, under weakly well-layered valuations, the algorithm terminates in polynomial time. Towards this, we will show that, in fact, the Fix phase never executes when all the valuations are weakly well-layered. Hence, in every outer iteration of Algorithm \ref{alg:AddandFix2} the number of unassigned goods strictly decreases, and the algorithm terminates in polynomial time. 

We will show that the Fix phase never executes via an inductive argument. In the base case---i.e., in the very first outer iteration---we have $A_i = \emptyset$ for all agents $i$. Now, for the first iteration, write $S$ to denote the subset of goods assigned to the selected agent $p$ in the Add phase. Further, let $g^*$ denote the last good assigned in the Add phase. Then, by the loop-execution condition in Line~\ref{line:AddCondition}, we have $v_p(S \setminus \{ g^* \}) = 0$, since all other agents have value $0$. Further, given that $v_p$ is weakly well-layered and the set $S \setminus \{ g^* \}$ is populated greedily, any set of goods of cardinality $|S|-1$ has value $0$. Hence, upon the removal of any good $g$ from $S$, agent $p$ has value $0$ for the remaining subset, since it has size $|S|-1$. 
Therefore, the Fix phase (see Line~\ref{line:FixCondition}) will not execute in the first outer iteration. 

For the inductive step, fix an outer iteration. Let $\alloc=(A_1, \ldots, A_n)$ be the allocation at the beginning of the iteration and $\overline{U} = M \setminus \left(\cup_i A_i \right)$ be the set of unassigned goods. For the poorest agent $p$ selected in the iteration, consider the bundle $A_p$ and write $\overline{S} \subseteq \overline{U}$ to denote the subset of goods assigned to $p$ in the Add phase of the iteration. In addition, let $\overline{g} \in \overline{S}$ be the last good assigned in the Add phase. 
The following Claim asserts that all strict 
subsets $T \subsetneq \left(A_p \cup \overline{S}\right)$ have value $v_p(T) \leq v_p \left( \left(A_p \cup \overline{S} \right)\setminus \{ \overline{g} \} \right)$. 

\begin{restatable}{claim}{ClaimAddFixWWL}
\label{claim:AddandFixWWL}
For weakly well-layered valuation $v_p$ we have 
\begin{align*}
\left(A_p \cup \overline{S} \right)\setminus \{ \overline{g} \}  \ \in \argmax_{X \subsetneq A_p \cup \overline{S}} v_p(X).
\end{align*}
\end{restatable}

We use Claim \ref{claim:AddandFixWWL} to complete the inductive step; the proof of the claim itself appears at the end of the subsection. The execution criterion of the Add phase (Line \ref{line:AddCondition}) implies that before good $\overline{g}$ was included in the agent $p$'s bundle its value was at most $v_{p'}(A_{p'})$, i.e., $v_p\left( \left(A_p \cup \overline{S} \right) \setminus \{ g \} \right) \le v_{p'}(A_{p'})$. Hence, via Claim \ref{claim:AddandFixWWL}, for every strict subset $T \subsetneq A_p \cup \overline{S}$ we have $v_p(T) \leq v_{p'}(A_{p'})$. The execution condition for the Fix phase will hence not be satisfied. This completes the proof of the theorem. 
\end{proof}

\begin{proof}[Proof of Claim~\ref{claim:AddandFixWWL}]
The induction hypothesis implies that Fix phase has not executed so far in the algorithm. Hence, during the previous iterations, the set of unassigned goods has decreased monotonically and the agents bundles, $A_i$s, have increased monotonically. We can, hence, index the goods in $A_p$ in the order in which they were included in the bundle, i.e., $A_p =\{g_1, g_2, \ldots, g_\ell\}$, where $g_1$ was the first good included in the bundle, and $g_\ell$ is the most recent. 
For each index $t \leq \ell$, write the prefix subset $G_t := \{g_1, \ldots, g_t\}$ and $G_0 := \emptyset$. Also, let $U_t$ denote the set of unassigned goods from which good $g_t$ was selected by agent $p$. As observed previously, we have $U_1 \supseteq U_2 \supseteq \ldots \supseteq U_\ell \supseteq \overline{U}$.

Therefore, for each index $t <\ell$, the following containment holds: $U_t \supseteq \overline{U} \cup \{g_t, g_{t+1}, \ldots, g_\ell\}  = \overline{U} \cup \left( G_\ell \setminus G_{t-1} \right)$.  Note that good $g_t$ was selected from $U_t$ by agent $p$ greedily to maximize the marginal contribution over $G_{t-1}$. That is, i.e., ${\displaystyle g_t \in \argmax_{g \in U_t} \left( v_p(G_{t-1} \cup g) - v_p(G_{t-1}) \right)}$ and, hence, the above-mentioned containment gives us 
\begin{align}
g_t \in \argmax_{g \in \overline{U} \cup \left( G_\ell \setminus G_{t-1} \right)} \left( v_p(G_{t-1} \cup g) - v_p(G_{t-1}) \right) \label{eqn:max-marg}
\end{align}
By definition, we have $G_\ell = A_p$. Hence, equation (\ref{eqn:max-marg}) implies that, in the set $\left(\overline{U} \cup A_p \right) \setminus G_{t-1}$, the good $g_t$ has maximum marginal contribution over $G_{t-1}$:
\begin{align}
g_t \in \argmax_{g \in \left(\overline{U} \cup A_p \right) \setminus G_{t-1}} \left( v_p(G_{t-1} \cup g) - v_p(G_{t-1}) \right) \label{eqn:max-marg1}
\end{align} 
Equation (\ref{eqn:max-marg1}) holds for all indices $t \leq \ell$. Therefore, the greedy algorithm (as specified in Definition \ref{definition:well-layered} and when executed for valuation $v_p$) would lead to subset $G_t=\{g_1, \ldots, g_t\}$, for each index $t \leq \ell$. Furthermore, with $G_\ell = A_p$, and given the selection criterion of the subset $\overline{S} \subseteq \overline{U}$ in the Add phase, the greedy algorithm when executed for $\left( |A_p \cup \overline{S}| -1 \right)$ steps over the goods in $A_p \cup \overline{U}$ would lead to the subset $(A_p \cup \overline{S}) \setminus \{ \overline{g}\}$; recall that $\overline{g} \in \overline{S}$ was the last good included in $p$'s bundle in the Add phase.   

Hence, given that $v_p$ is a weakly well-layered valuation (Definition \ref{definition:well-layered}), we obtain 
\begin{align*}
\displaystyle \left(A_p \cup \overline{S} \right)\setminus \{ \overline{g} \} \in \argmax_{X \subsetneq A_p \cup \overline{S}} \ v_p(X).
\end{align*}
The claim stands proved. 
\end{proof}

\subsection{Approximate \eqx{} Allocations}
\label{section:monotone-apx}
As mentioned previously, for general monotone valuations, computing an \eqx{} allocation is a {\rm PLS}-hard problem. Complementing this hardness result, this section establishes that an approximately \eqx{} allocation can be computed efficiently. In particular, for parameter $\varepsilon \in [0,1]$, an allocation $\alloc$ is said to be an $(1 - \varepsilon)$-\eqx{} allocation if for every pair of agents $i, j \in N$ and for each good $g \in A_i$ we have $(1- \varepsilon) \ v_i(A_i \setminus \{ g \}) \leq v_j(A_j)$. Hence, in an $(1- \varepsilon)$-\eqx{} allocation, removing any good from any agent $i$'s bundle brings down $i$'s value to below $\frac{1}{1- \varepsilon}$ times the minimum. Also, note that $\varepsilon = 0$ corresponds to an exact \eqx{} allocation. 

We modify Algorithm~\ref{alg:AddandFix2} to obtain an approximately \eqx{} allocation. We replace Lines~\ref{line:AddCondition} and~\ref{line:FixCondition} in Algorithm~\ref{alg:AddandFix2} with their approximate versions as follows:\footnote{Recall that Lines~\ref{line:AddCondition} and~\ref{line:FixCondition} in Algorithm~\ref{alg:AddandFix2} check the execution condition for the Add and Fix phases, respectively.}

\noindent \ref{line:AddCondition}: \ \textbf{while} $( 1- \varepsilon) \ v_p(A_p) \leq v_{p'}(A_{p'})$ \textbf{and} $U \neq \emptyset$ \textbf{do}...

\noindent \ref{line:FixCondition}: \ \textbf{while} there exists $\widehat{g} \in A_p$ such that $( 1- \varepsilon) v_p(A_p \setminus \{ \widehat{g} \}) > v_{p'}(A_{p'})$ \textbf{do}...

The following theorem provides our main approximation guarantee under monotone valuations. The proof of this result is similar to that of Theorem~\ref{thm:Monotone}; see Appendix \ref{appendix:MonApx}.   

\begin{restatable}{theorem}{TheoremMonApx}
\label{thm:MonotoneApproximation}
Given parameter $\varepsilon \in (0,1)$ and any fair division instance with monotone valuations, a $(1-\varepsilon)$-\eqx{} allocation can be computed in $O \left(\frac{m^2n}{\varepsilon} \log V_{\max} \right)$ time. 
\end{restatable}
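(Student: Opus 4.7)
The plan is to mirror the two-part analysis of Theorem~\ref{thm:Monotone} (correctness + termination bound), but with the integer/additive progress argument replaced by a multiplicative one that exploits the $(1-\varepsilon)$ slack in the modified Add/Fix thresholds.

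For correctness, I would induct on the outer iterations to show that the current allocation is $(1-\varepsilon)$-\eqx{}. Fix an outer iteration selecting poorest agent $p$, with allocation $\alloc$ at the start and $\mcB$ at the end. As in Theorem~\ref{thm:Monotone}, only $B_p$ differs from $A_p$, and by monotonicity $v_p(B_p)\ge v_p(A_p)$. The inductive $(1-\varepsilon)$-\eqx{} bounds for each agent $i\neq p$ transfer from $\alloc$ to $\mcB$ because the only value that changed, $v_p$, has weakly increased. For agent $p$ itself, the termination condition of the modified Fix phase directly gives $(1-\varepsilon)v_p(B_p\setminus\{g\})\le v_{p'}(A_{p'})=v_{p'}(B_{p'})$ for every $g\in B_p$, and since $p'$ was second-poorest at the start, this bound chains to $v_j(B_j)$ for every $j\neq p$; the case $j=p$ is immediate from monotonicity.

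For the runtime, the key step is an analog of Claim~\ref{claim:AddandFixIteration}: at the end of any non-terminating outer iteration, $v_p(B_p) > v_{p'}(A_{p'})/(1-\varepsilon)$. This holds because, if $U$ does not empty, the Add phase exits only once $v_p$ strictly exceeds this threshold, and the modified Fix phase only removes a good $\widehat g$ when $v_p(A_p\setminus\{\widehat g\})$ remains strictly above it. Since $p$ was poorest at the start, $v_p(A_p)\le v_{p'}(A_{p'})$, so $v_p(B_p) > v_p(A_p)/(1-\varepsilon)$: whenever an agent is picked with positive value, its value grows by a factor of at least $1/(1-\varepsilon)$. Integrality of valuations bounds the first positive value by $1$, so each agent is selected at most $1+\log_{1/(1-\varepsilon)}V_{\max}$ times; using the elementary inequality $-\log(1-\varepsilon)\ge\varepsilon$ for $\varepsilon\in(0,1)$, this becomes $O(\log V_{\max}/\varepsilon)$, giving $O(n\log V_{\max}/\varepsilon)$ outer iterations. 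Since each Add/Fix phase makes at most $m$ updates, each requiring $O(m)$ oracle calls, the total runtime is the claimed $O(m^2n\log V_{\max}/\varepsilon)$.

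The main delicacy I anticipate is the ``bootstrap'' case in which a selected agent $p$ has $v_p(A_p)=0$, since a value jump from $0$ to positive cannot be charged to a multiplicative increase. When $v_{p'}(A_{p'})=0$ as well, the Add phase halts as soon as $v_p$ turns positive (the threshold $v_{p'}/(1-\varepsilon)$ collapses to $0$), and I must verify the Fix phase does not push $v_p$ back down: it only removes $\widehat g$ when $v_p(A_p\setminus\{\widehat g\})>0$, so by monotonicity $v_p$ stays strictly positive and, by integrality, $v_p(B_p)\ge 1$. This contributes at most one non-multiplicative selection per agent, absorbed into the additive $O(1)$ in the per-agent bound. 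Beyond this bookkeeping, the rest of the argument is a direct approximation-aware rewriting of the proof of Theorem~\ref{thm:Monotone}.
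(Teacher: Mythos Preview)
Your proposal is correct and follows essentially the same approach as the paper: an approximate analog of Claim~\ref{claim:AddandFixIteration} yielding a multiplicative $1/(1-\varepsilon)$ increase per non-terminal outer iteration, combined with the same inductive correctness argument as in Theorem~\ref{thm:Monotone}. You are in fact more careful than the paper about the ``bootstrap'' iteration where $v_p(A_p)=0$; the paper's bound $\log_{1/(1-\varepsilon)}v_i(M)$ tacitly absorbs this into the big-$O$, whereas you account for it explicitly.
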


\begin{remark}
Note that the results obtained in this section for monotone nondecreasing valuations, also hold for monotone nonincreasing valuations (i.e., chores, rather than goods), with appropriate modifications to the algorithms.
\end{remark}

\section{Additive Nonmonotone Valuations}  
\label{sec:Chores}

We now present our results for nonmonotone valuations. We will focus primarily on additive valuations, and will show that even in this case, \eqx{} allocations may not exist. Furthermore, \eqx{} allocations can be hard to compute even if they do exist. Some of our results extend beyond additive valuations, however, in this section, we will conform to additive valuations for ease of exposition. 

It is known that an \eqx{} allocation can be computed in polynomial time if all the agents have identical, additive valuations~\cite{AzizR20}. Complementing this positive result, we next show that an \eqx{} allocation may not exist among agents that have nonidentical (and nonmonotone) valuations. Our negative result holds even for two agents with additive, normalized valuations. The fair division instance demonstrating this nonexistence of \eqx{} allocations is given in Table~\ref{table:NormalisedExample}. In particular, the instance  highlights that if there are items with positive value for one agent and negative value for another---i.e., the valuations are \emph{subjective} rather than {objective}---then an \eqx{} allocation may not exist. Further, in such instances, it is {\rm NP}-hard to determine if an \eqx{} allocation exists.

\begin{table}[!ht]
\centering
    \begin{tabular}{|c|c|c|c|}
    \hline
     & $x_1$ & $x_2$ & $x_3$ \\ \hline
     Agent 1 & $+1$ & $-1$ & +100 \\ \hline
     Agent 2 & $-1$ & $+1$ & +100 \\ \hline     
    \end{tabular}
    \caption{Example showing nonexistence of \eqx{} allocations in Theorem \ref{theorem:no-mas-eqx}.} 
     \label{table:NormalisedExample}
\end{table}

An example for nonexistence is given in Table~\ref{table:NormalisedExample}. The computational hardness is established via a reduction from the \Part \ problem in the following theorem.  

\begin{restatable}{theorem}{TheoremNonExistence}
\label{theorem:no-mas-eqx}
An \eqx{} allocation may not exist for two agents with nonmonotone, additive, normalised valuations. Further, in such instances, it is weakly {\rm NP}-hard to determine whether an \eqx{} allocation exists or not.
\end{restatable}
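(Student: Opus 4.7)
I would first verify that the instance in Table~\ref{table:NormalisedExample} admits no \eqx{} allocation. The valuations are normalized since $v_1(M) = v_2(M) = 100$. I would enumerate the $2^3 = 8$ bipartitions of $\{x_1, x_2, x_3\}$, using the obvious symmetry that simultaneously swaps the two agents and exchanges $x_1 \leftrightarrow x_2$ to reduce the check to four representative allocations. In each of these, item $x_3$ (with common value $100$) sits in exactly one bundle and produces a value gap of at least $99$ between the two agents, while the remaining items $x_1, x_2$ have absolute value only $1$. Consequently, removing a single item from either side cannot span that gap in the direction required by \eqx{} (``good removal makes the owner poorest'' or ``chore removal makes the owner richest''), so each allocation contains a violating item.

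\textbf{Hardness reduction.} For weak \NP-hardness, I would reduce from \Part. Given positive integers $a_1, \ldots, a_k$ with $\sum_i a_i = 2S$, I construct a two-agent instance with items $y_1, \ldots, y_k$ valued $v_1(y_i) = v_2(y_i) = a_i$, together with two ``switching'' items: $x_1$ with $v_1(x_1) = +1,\ v_2(x_1) = -1$, and $x_2$ with $v_1(x_2) = -1,\ v_2(x_2) = +1$. The valuations are additive, subjective, and normalized with $v_1(M) = v_2(M) = 2S$, and the reduction is polynomial in the input size. For the forward direction, if $P \subseteq [k]$ satisfies $\sum_{i \in P} a_i = S$, the allocation $A_1 := P \cup \{x_1\}$, $A_2 := ([k] \setminus P) \cup \{x_2\}$ satisfies $v_1(A_1) = v_2(A_2) = S+1$, which is exactly equitable and contains no chores, hence trivially \eqx{}.

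\textbf{Converse and main obstacle.} For the converse, assume $(A_1,A_2)$ is \eqx{} and set $\sigma := \sum_{i : y_i \in A_1} a_i$. I would case-split on the joint placement of $\{x_1, x_2\}$ into the four possible cases. In each case I would write down the two \eqx{} inequalities induced by $x_1$ and $x_2$ — each acting as a good for one agent and a chore for the other according to the case — and observe that, after accounting for the $\pm 1$ shifts and using integrality, these two inequalities collapse to $\sigma \leq S$ together with $\sigma \geq S$, forcing $\sigma = S$ and thereby exhibiting a \Part{} solution. The \eqx{} constraints contributed by the $y_i$'s (which are goods for both agents) reduce, once $\sigma = S$, to the trivial inequality $a_j \geq 0$. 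The main technical work lies in this four-case bookkeeping: although the algebra in each case is short, keeping track of which switching item is a good versus a chore for which agent (and hence which direction the \eqx{} inequality takes) requires care, and this is where the argument could most easily slip.
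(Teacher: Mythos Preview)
Your approach is essentially the same as the paper's: non-existence via case analysis on Table~\ref{table:NormalisedExample}, and hardness via a \Part{} reduction using two antisymmetric ``switching'' items $x_1,x_2$ together with identically-valued goods, followed by a four-way case split on the placement of $x_1,x_2$. The paper's forward direction differs cosmetically (it gives both $x_1,x_2$ to agent~1), and its reverse direction first establishes $|v_1(A_1)-v_2(A_2)|\le 1$ and then uses parity, but this is just a repackaging of the same four-case computation you outline.

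There is one small gap worth flagging. In each case your two \eqx{} inequalities actually give $S-\tfrac12 \le \sigma \le S+\tfrac12$, not $\sigma\le S$ and $\sigma\ge S$ directly; your appeal to ``integrality'' collapses this to $\sigma=S$ only if $S=\tfrac12\sum_i a_i$ is itself an integer. When $\sum_i a_i$ is odd your constructed instance can admit an \eqx{} allocation (e.g.\ $a_1=1,a_2=2$ with $A_1=\{y_1,x_1\}$, $A_2=\{y_2,x_2\}$) even though \Part{} has no solution, so the reduction as stated is not an equivalence. The paper sidesteps this by assigning each good value $2a_i$ rather than $a_i$, which forces $|v_1(A_1)-v_2(A_2)|$ to be even in every allocation and makes the collapse to equality automatic. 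You can equally well just restrict to even-sum \Part{} instances (still \NP-hard), but you should say so explicitly.
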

\begin{proof}
We show nonexistence of \eqx{} allocations in the instance given in Table~\ref{table:NormalisedExample}. Assume, towards a contradiction, that there is an \eqx{} allocation $\alloc= (A_1, A_2)$ in the specified instance. Note that, here, under \eqx{}, for $i =1$ and $i=2$, agent $i$ cannot receive two items that are both goods for $i$. Otherwise, removing the good, besides $x_3$, from $i$'s bundle would not drive $v_i(A_i)$ below the other agent's value. That is, for $\alloc$ to be \eqx{}, it must hold that $|A_1 \cap \{x_1, x_3\}| \leq 1$ and $|A_2 \cap \{x_2, x_3\}| \leq 1$.

Therefore, if $x_3 \in A_1$, then $x_1 \in A_2$. Now, note that $x_1$ is a chore for agent 2 and even after removing it from $A_2$, we have $v_2(A_2 \setminus \{x_1\}) \leq 1$. In addition, the containment $x_3 \in A_1$ ensures $v_1(A_1) \geq 99$. These inequalities imply that even after the removal of item $x_1$ (which is a chore for agent 2) from $A_2$, agent 2's value does not exceed $v_1(A_1)$. That is, the \eqx{} condition cannot hold if $x_3 \in A_1$. 

On the other hand, if $x_3 \in A_2$, then we have $x_2 \in A_1$. Item $x_2$ is a chore for agent 1, and in the current case the following inequalities hold: $v_1(A_1 \setminus \{x_2\}) \leq 1$ and $v_2(A_2) \geq 99$. Therefore, again, allocation $\alloc$ does not uphold the \eqx{} criterion. Overall, we obtain that the instance does not admit an \eqx{} allocation. 

To establish \NP-hardness, we provide a reduction from \Part, known to be \NP-complete. An instance of \Part{} consists of a set of $M$ positive integers $a_1$, $\ldots$, $a_M$. The problem is to determine if there exists a subset $S \subseteq [M]$ such that $\sum_{i \in S} a_i$ $= \sum_{i \not \in S} a_i$. 

Given an instance of \Part, we construct a fair division instance as follows. There are two agents $1$ and $2$, and $M+2$ items. The first two items, $x_1$ and $x_2$, are similar to the first two items in Table~\ref{table:NormalisedExample}. That is, item $x_1$ has value $+1$ for agent $1$ and $-1$ for agent $2$. Item $x_2$ has value $-1$ for agent $1$ and $+1$ for agent $2$. The remaining items, $g_1$, $\ldots$, $g_M$, are all identical goods, with item $g_i$ possessing value $2a_i$ for both agents (i.e., twice the value of corresponding integer in the \Part{} instance).

We show that that the \Part{} instance has a solution iff the constructed fair division instance admits an \eqx{} allocation. \\

\noindent
{\it Forward Direction:} Suppose there exists $S \subseteq [n]$ such that $\sum_{i \in S} a_i = \sum_{i \not \in S} a_i$. Then, consider the allocation that assigns agent $1$ items $x_1$, $x_2$, and all goods with indices in $S$. This results in an equitable (hence, an \eqx{}) allocation in which the both the agents' values are equal to $\sum_{i \in M} a_i$. \\

\noindent
{\it Reverse Direction:} In any allocation $\alloc = (A_1, A_2)$ of the fair division instance, and no matter how $x_1$ and $x_2$ are allocated, the absolute difference in value $|v_1(A_1) - v_2(A_2)|$ must be even. For the reverse direction of the reduction, assume that the constructed fair division instance admits an \eqx{} allocation  $\alloc =(A_1, A_2)$. Using a case analysis based on the assignment of the items $x_1$ and $x_2$ between the two agents, one can show that under the \eqx{} allocation $\alloc$ we must have $|v_1(A_1) - v_2(A_2)| \leq 1$. These bounds on the value difference imply that, in the \eqx{} allocation, the agents' values are in fact equal
\begin{align}
v_1(A_1) = v_2(A_2) \label{eq:val-equality}
\end{align}   

Further, note that there are three possible cases when considering the assignments of the items $x_1$ and $x_2$ between the two bundles $A_1$ and $A_2$: 
\begin{itemize}
\item[(i)] Both the items are assigned to the same agent. Since $v_1(\{x_1, x_2\}) = v_2(\{x_1, x_2\}) = 0$, in this case we have $v_1(A_1) = v_1(A_1 \setminus \{x_1, x_2\})$ and $v_2(A_2) = v_2(A_2 \setminus \{x_1, x_2\})$. Using equation (\ref{eq:val-equality}), we further obtain 
\begin{align}
v_1(A_1 \setminus \{x_1, x_2\}) = v_2(A_2 \setminus \{x_1, x_2\}) \label{eq:eqx-part}
\end{align}
\item[(ii)] The items $x_1$, $x_2$ are split between the bundles with $x_1 \in A_1$ and $x_2 \in A_2$. In this case, $v_1(A_1) = v_1(A_1 \setminus \{x_1\}) + 1$ and $v_2(A_2) = v_2(A_2 \setminus \{x_2\}) + 1$. Again, the equality $v_1(A_1 \setminus \{x_1, x_2\}) = v_2(A_2 \setminus \{x_1, x_2\})$ holds.
\item[(iii)] The items $x_1$, $x_2$ are split between the bundles with $x_1 \in A_2$ and $x_2 \in A_1$. Here, $v_1(A_1) = v_1(A_1 \setminus \{x_2\}) - 1$ and $v_2(A_2) = v_2(A_2 \setminus \{x_1\}) - 1$. This leads to the same equality obtained in the previous cases: $v_1(A_1 \setminus \{x_1, x_2\}) = v_2(A_2 \setminus \{x_1, x_2\})$.
\end{itemize}
That is, equality (\ref{eq:eqx-part}) holds in all three cases. Hence, the goods in the set $A_1 \setminus \{x_1, x_2\}$ correspond to elements $S \subseteq [M]$ with the property that 
\begin{align*}
\sum_{i \in S} a_i & = \frac{1}{2} v_1(A_1 \setminus \{x_1, x_2\} ) \\
& = \frac{1}{2} v_2(A_2 \setminus \{x_1, x_2\}) \tag{via (\ref{eq:eqx-part})}\\
& =  \sum_{j \notin S} a_j.
\end{align*}
Therefore, the set $S$ is a solution for \Part{} instance. This establishes the reverse direction of the reduction and completes the proof of the theorem. 
\end{proof}

\subsection{Pseudo-polynomial Time Algorithm for a Fixed Number of Agents with Additive Valuations}
\label{sec:PseudoPoly}
We now introduce a dynamic programming algorithm that operates in pseudo-polynomial time and can determine the existence of an \eqx{} allocation in a given fair division instance with a constant number of agents with additive subjective valuations. For ease of exposition, we will limit our discussion to the case of two agents. The algorithm and analysis can be easily extended to a constant number of agents.

Let us assume that the items in $M$ are labelled $x_1,x_2,\hdots,x_m$. Define $H \coloneqq \max_{i \in N}\max_{x_j\in M} |v_i(x_j)|$. The dynamic programming algorithm has two components: table-filling and table-checking. The table-filling component of algorithm maintains a binary-valued table $A[j,w_1,w_2,h_1,d_1,h_2,d_2]$. An entry \\$A[j,w_1,w_2,h_1,d_1,h_2,d_2]=1$ if there exists an allocation of the first $j$ items such that agent 1's value for its bundle is $w_1$, agent $2$'s value for its bundle is $w_2$, the value of minimum valued good in agent $1$'s bundle is $h_1$ while that of agent $2$ is $h_2$ and the value of maximum valued chore in agent $1$'s bundle is $d_1$ while that of agent $2$ is $d_2$. For an agent $i$, we thus have $w_i \in [-mH,mH],\;h_i \in [0,H],\;d_i \in [-H,0]$ and the parameter $H$ is as defined before. Note that for values of the parameters outside the range, we have $A[j,w_1,w_2,h_1,d_1,h_2,d_2]=0$.  

Once we have values for all the entries of the table, checking the existence of an \eqx{} allocation is a simple task -- it is carried out by the table-checking component of algorithm. This component examines $A[m,w_1,w_2,h_1,d_1,h_2,d_2]$ for all values of $w_i,h_i$ and $d_i$ for $i \in \{1,2\}$. For each table entry with $A[m,w_1,w_2,h_1,d_1,h_2,d_2]=1$, the component verifies the \eqx{} conditions: If $w_2 > w_1$ (agent 2 values its bundle higher than agent 1 values its own), then the we check whether $w_2-h_2 \leq w_1$ (removal of minimum valued good from agent 2's bundle causes inquity to vanish) and $w_1-d_1 \geq w_2$ (removal of maximum valued chore from agent 1's bundle causes inequity to vanish) are satisfied, or not. Similarly,  for the case $w_1 > w_2$, we check whether the inequalities $w_1-h_1 \leq w_2$ and $w_2-d_2 \geq w_1$ are satisfied, or not. If the algorithm finds a filled (with $1$) entry in the table for which the \eqx{} conditions hold, then it declares that an \eqx{} allocations exists. Otherwise, if no such entry is found, the algorithm declares that the given instance does not admit an \eqx{} allocation. 
 
\begin{algorithm}[!htp]
  \caption{Table-filling}
  \label{algo:TableFill}
  \textbf{Input:} Instance $(N,M,\mathcal{V})$ with two agents and additive subjective valuations. \\
  \textbf{Output:} Binary-valued table $A$
  \begin{algorithmic}[1]  

\STATE \begin{align*}
& f_1 =\begin{cases}
1 & \text{if }v_1(x_1)\geq 0,~ w_1=v_1(x_1),~ w_2=0,~ h_1=v_1(x_1),\\
& d_1=0,~ h_2=0,~ d_2=0 \qquad \qquad \text{\COMMENT{$x_1$ is a good for agent 1, and assigned to agent 1}}\\
& \qquad \qquad \qquad \qquad \qquad \qquad \text{OR}\\
& \text{if }v_1(x_1) < 0,~ w_1=v_1(x_1),~ w_2=0,~ h_1=0,\\
& d_1=v_1(x_1),~ h_2=0,~ d_2=0 \qquad \text{\COMMENT{$x_1$ is a chore for agent 1, and assigned to agent 1}}\\
0 & \text{Otherwise}
\end{cases}\\
&f_2=\begin{cases}
1 & \text{if }v_2(x_1)\geq 0,~ w_1=0,~ w_2=v_2(x_1),~ h_1=0,\\
& d_1=0,~ h_2=v_2(x_1),~ d_2=0\qquad \text{\COMMENT{$x_1$ is a good for agent 2, and assigned to agent 2}}\\
& \qquad \qquad \qquad \qquad \qquad \qquad \text{OR}\\
& \text{if }v_2(x_1) < 0,~ w_1=0,~ w_2=v_2(x_1),~ h_1=0,\\
& d_1=0,~ h_2=0,~ d_2=v_2(x_1) \qquad \text{\COMMENT{$x_1$ is a chore for agent 2, and assigned to agent 2}}\\
0 & \text{Otherwise}\\
\end{cases}\\
&\text{for}\; i \in \{1,2\},\;w_i \in [-mH,mH],\;h_i \in [0,H],\;d_i \in [-H,0]\\
&\qquad A[1,w_1,w_2,h_1,d_1,h_2,d_2] = f_1 \lor f_2
\end{align*}
\STATE for $j \in [2,m]$ \\
   \quad for $ i \in \{1,2\},\;w_i \in [-mH,mH],\;h_i \in [0,H],\;d_i \in [-H,0]$ 
\begin{align*}
&f_1 = \begin{cases}\lor_{h_1^{'} \in[h_1,H]}A[j-1,w_1-v_1(x_j),w_2,h_1^{'},d_1,h_2,d_2]  & \text{if } v_1(x_j) \geq 0, ~v_1(x_j)=h_1\\
A[j-1,w_1-v_1(x_j),w_2,h_1,d_1,h_2,d_2] & \text{if }v_1(x_j) \geq 0, ~v_1(x_j)>h_1\\
0 & \text{if }v_1(x_j) \geq 0, ~v_1(x_j)<h_1\\
\lor_{d_1^{'} \in [-H,d_1]}A[j-1,w_1-v_1(x_j),w_2,h_1,d_1^{'},h_2,d_2] & \text{if } v_1(x_j) < 0, ~v_1(x_j)=d_1\\
A[j-1,w_1-v_1(x_j),w_2,h_1,d_1,h_2,d_2] & \text{if }v_1(x_j) < 0,~v_1(x_j)<d_1\\
0 & \text{if }v_1(x_j) < 0, ~v_1(x_j)>d_1
\end{cases}\\
&f_2= \begin{cases}\lor_{h_2^{'} \in[h_2,H]}A[j-1,w_1,w_2-v_2(x_j),h_1,d_1,h_2^{'},d_2]  & \text{if } v_2(x_j) \geq 0, ~v_2(x_j)=h_2\\
A[j-1,w_1,w_2-v_2(x_j),h_1,d_1,h_2,d_2] & \text{if }v_2(x_j) \geq 0, ~v_2(x_j)>h_2\\
0 & \text{if }v_2(x_j) \geq 0, ~v_2(x_j)<h_2\\
\lor_{d_2^{'} \in [-H,d_2]}A[j-1,w_1,w_2-v_2(x_j),h_1,d_1,h_2,d_2^{'}] & \text{if } v_2(x_j) < 0, ~v_2(x_j)=d_2\\
A[j-1,w_1,w_2-v_2(x_j),h_1,d_1,h_2,d_2] & \text{if }v_2(x_j) < 0, ~v_2(x_j)<d_2\\
0 & \text{if }v_2(x_j) < 0, ~v_2(x_j)>d_2
\end{cases}\\
&A[j,w_1,w_2,h_1,d_1,h_2,d_2] = f_1 \lor f_2
\end{align*}
\RETURN $A$.
\end{algorithmic}
\end{algorithm}

\begin{algorithm}[!htp]
  \caption{Table-checking}
  \label{algo:TableCheck}
  \textbf{Input:} Instance $(N,M,\mathcal{V})$ with two agents and additive subjective valuations. \\
  \textbf{Output:} \eqx{} allocation exists and not.
  \begin{algorithmic}[1]  
\STATE $A=$Table-filling($(N,M,\mathcal{V})$)
\FORALL{all entries with $A[m,w_1,w_2,h_1,d_1,h_2,d_2]=1$}  
      \IF{$w_1 < w_2$}
          \IF{$w_2-h_2 \leq w_1$ and $w_1-d_1 \geq w_2$}
              \RETURN ``\eqx{} allocation exists''
          \ENDIF
     \ELSE
          \IF{$w_1-h_1 \leq w_2$ and $w_2-d_2 \geq w_1$}
              \RETURN ``\eqx{} allocation exists''
          \ENDIF
    \ENDIF
\ENDFOR
\RETURN ``\eqx{} allocation does not exist''
\end{algorithmic}
\end{algorithm}

\begin{restatable}{lemma}{LemmaTablefill}
\label{lemma:LemmaTablefill}
Given any fair division instance involving two agents with additive subjective valuations, Algorithm \ref{algo:TableFill} correctly computes the table $A[j,w_1,w_2,h_1,d_1,h_2,d_2]$ for all values of the parameters $j,w_i,h_i $ and $d_i$ for $i \in \{1,2\}$.
\end{restatable}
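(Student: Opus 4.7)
The plan is to prove the lemma by induction on the index $j$, which tracks how many of the first $j$ items have been considered. Throughout, I will interpret the entry $A[j,w_1,w_2,h_1,d_1,h_2,d_2]=1$ as asserting the existence of a partial allocation of $\{x_1,\ldots,x_j\}$ realizing exactly the bundle-value, min-good-value, and max-chore-value parameters specified. The two directions of this biconditional (soundness and completeness of the computed table) need to be checked in parallel at each step.

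For the base case $j=1$, since there is only one item $x_1$, any allocation assigns it to exactly one of the two agents, and the resulting sextuple of parameters is forced. I would simply verify that the two disjuncts $f_1$ and $f_2$ enumerate precisely the four possibilities (item goes to agent 1 or 2, and is a good or a chore from that agent's perspective), and that in each case the $h_i$, $d_i$ slots are filled with the correct value and the other slots are zero (since the other agent's bundle is empty). This is a finite case check.

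For the inductive step, fix $j \ge 2$ and assume correctness of the table at index $j-1$. Any allocation of $\{x_1,\ldots,x_j\}$ is obtained from an allocation of $\{x_1,\ldots,x_{j-1}\}$ by assigning $x_j$ to agent $1$ or agent $2$; the disjunction $f_1 \lor f_2$ in the recurrence mirrors this. Conditioning on the assignment of $x_j$ to agent $1$ (the case $f_1$; $f_2$ is symmetric), the bundle value $w_1$ increases by $v_1(x_j)$, giving the shift $w_1 - v_1(x_j)$ in the previous coordinate, while $w_2$, $h_2$, $d_2$ are unchanged by additivity and by the fact that agent 2's bundle is untouched. The heart of the verification is how the new item affects $h_1$ and $d_1$: if $v_1(x_j) \ge 0$, then $d_1$ is unaffected, but $h_1$ could only decrease, so we do a case split on whether $v_1(x_j) = h_1$ (in which case the previous minimum good $h_1'$ can be any value in $[h_1,H]$, hence the disjunction over $h_1'$), $v_1(x_j) > h_1$ (the previous $h_1$ must already have equalled the new $h_1$), or $v_1(x_j) < h_1$ (impossible, forcing $0$). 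The case $v_1(x_j) < 0$ is symmetric with $d_1$ playing the role of $h_1$. I will check that these six sub-cases in the algorithm exactly capture the forward inclusion (any allocation realizing the target sextuple arises from a valid predecessor sextuple) and the backward inclusion (every predecessor flagged as $1$ extends to a valid allocation), completing both directions of the induction.

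The main delicate point is the bookkeeping for $h_i$ and $d_i$: unlike $w_i$, these are extremal quantities rather than sums, and their predecessor values are not uniquely determined by the successor. The disjunctions $\bigvee_{h_1' \in [h_1,H]}$ and $\bigvee_{d_1' \in [-H,d_1]}$ handle this nonuniqueness, and I would be explicit that the ranges are tight: increasing $h_1'$ above $h_1$ is exactly what is needed so that inserting $x_j$ (with $v_1(x_j)=h_1$) makes $x_j$ the new minimum good; no smaller $h_1'$ is admissible because any previously allocated good below $h_1$ would contradict the successor value $h_1$. A parallel argument rules out spurious entries and confirms that no valid allocation is missed. Once this invariant is established, the lemma follows by induction on $j$.
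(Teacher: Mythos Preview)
Your proposal is correct and follows essentially the same approach as the paper: induction on $j$, with the base case enumerating the four ways to assign $x_1$ and the inductive step splitting on which agent receives $x_j$ and then on the three sub-cases comparing $v_i(x_j)$ to $h_i$ (or $d_i$). Your explicit attention to both directions of the biconditional and to why the range $[h_1,H]$ for $h_1'$ is tight is, if anything, slightly more careful than the paper's own treatment, but the structure and key ideas are the same.
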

\begin{proof}
We use induction on the number of items involved in an allocation.

Let us consider the base case. Here, we consider allocations involving only one item $x_1$. An allocation can have either item $x_1$ assigned to the agent $1$ or item $x_1$ assigned to the agent $2$. Consider the case in which item $x_1$ is assigned to the agent $1$. If item $x_1$ is a good for agent $1$ then we have $A[1,v_1(x_1),0,v_1(x_1),0,0,0]=1$. Otherwise, if $x_1$ is a chore for agent 1, we have $A[1,v_1(x_1),0,0,v_1(x_1),\\0,0]=1$. Similarly if item $x_1$ is assigned to the agent $2$, we get either $A[1,0,v_2(x_1),0,0,v_2(x_1),0]=1$ or  $A[1,0,v_2(x_1),0,0,0,v_2(x_1)]=1$ depending on whether $x_1$ is a good or a chore for agent 2. Since these are the only two feasible allocations of $x_1$, for other values of the parameters, the algorithm assigns 0 to $A[1,w_1,w_2,h_1,d_1,h_2,d_2]$. Thus, the algorithm computes $A[1,w_1,w_2,h_1,d_1,h_2,d_2]$ correctly for all values of the parameters and the base case is verified.

For the inductive case, let us assume that the algorithm correctly computes $A[j-1,w_1,w_2,h_1,d_1,h_2,d_2]$ for all values of the parameters. To compute $A[j,w_1,w_2,h_1,d_1,h_2,d_2]$, as in the base case, the algorithm considers four cases: $(i)$ the item $x_j$ is assigned to agent 1, and is a good for agent 1$(ii)$ the item $x_j$ is assigned to agent 1, and is a chore for agent 1 $(iii)$  the item $x_j$ is assigned to agent 2, and is a good for agent 2 and $(iv)$ the item $x_j$ is assigned to agent 2, and is a chore for agent 2. Consider case $(i)$, the other cases are symmetric. In case $(i)$, we have three possibilities: $(1)$ $v_1(x_j)=h_1$, i.e., the value of $x_j$ for  agent $1$ is equal to the desired value of minimum valued good in agent $1$'s bundle $(2) \;v_1(x_j) > h_1$ and $(3) \; v_1(x_j) < h_1$. 
\begin{itemize}
\item[(1)] If $v_1(x_j)=h_1$  then the algorithm checks if there exists an allocation of the first $j-1$ items, $w_1-v_1(x_j)$ as agent 1's value for its bundle and the desired value of minimum valued good in agent 1's bundle greater than or equal to  $h_1$ with values for all other parameters being same. By induction hypothesis, we know that the value of $A[j-1,w_1,w_2,h_1,d_1,h_2,d_2]$, for all values of the parameters are computed correctly. Thus, the algorithm correctly updates the value of $f_1$ in this case. \\
\item[(2)] Suppose $v_1(x_j) > h_1$, then the algorithm checks if there exists an allocation of first $j-1$ items and $w_1-v_1(x_j)$ as agent 1's value for its bundle with values for other parameters being same. Again, by induction hypothesis we can conclude that the algorithm updates $f_1$ correctly.\\
\item[(3)] In case $v_1(x_j) < h_1$, it's not possible to get an allocation satisfying the parameter $h_1$. Hence, the algorithm assigns $0$ to $f_1$. 
\end{itemize}
Thus, the algorithm updates $f_1$ correctly. Since the other cases are symmetric, we also conclude that the algorithm updates $f_2$ correctly. The variables $f_1$ and $f_2$ are then combined using $\lor$ to obtain the value of $A[j,w_1,w_2,h_1,d_1,h_2,d_2]$. Thus, the value of $A[j,w_1,w_2,h_1,d_1,h_2,d_2]$ is computed correctly by the algorithm.
\end{proof}

\begin{restatable}{theorem}{TheoremPseudoPoly}
\label{theorem:PseudoPoly}
Given any fair division instance involving two agents with additive subjective valuations, Algorithm \ref{algo:TableCheck} correctly determines the existence of an \eqx{} allocation in pseudo-polynomial time.
\end{restatable}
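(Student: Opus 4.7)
My plan is to derive the theorem by combining Lemma~\ref{lemma:LemmaTablefill} (correctness of the table-filling routine) with a short verification that the inequality checks in Algorithm~\ref{algo:TableCheck} correctly detect \eqx{}, followed by a routine runtime count. By the lemma, the final row $A[m,\cdot,\cdot,\cdot,\cdot,\cdot,\cdot]$ records exactly which \emph{signatures} $(w_1,w_2,h_1,d_1,h_2,d_2)$ are realised by some allocation of all $m$ items, so the question reduces to deciding whether any realisable signature corresponds to an \eqx{} allocation.

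For the correctness step, I will exploit the fact that, under additive valuations, the \eqx{} definition reduces to checks on the \emph{extremal} items each agent holds. If agent $2$ is strictly richer ($w_1 < w_2$), then the condition ``$v_2(A_2\setminus\{g\}) \le v_1(A_1)$ for every good $g\in A_2$'' is tightest when $g$ is the smallest-valued good, collapsing to $w_2 - h_2 \le w_1$; similarly, ``$v_1(A_1\setminus\{c\}) \ge v_2(A_2)$ for every chore $c\in A_1$'' is tightest at the chore of largest (least negative) value, collapsing to $w_1 - d_1 \ge w_2$. Symmetric inequalities handle $w_1 > w_2$, and $w_1 = w_2$ falls into the \texttt{else} branch where both inequalities are trivial. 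Hence an \eqx{} allocation exists iff the scan in Algorithm~\ref{algo:TableCheck} locates a final-row entry with value $1$ passing the corresponding pair of inequalities, which is exactly what is checked.

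For the runtime, I tally the table dimensions: $O(m)$ choices of $j$, $O(mH)$ for each of $w_1, w_2$ (since $|w_i| \le mH$), and $O(H)$ for each of $h_i, d_i$, giving $O(m^3 H^6)$ entries, where $H := \max_{i,j}|v_i(x_j)|$. Each inductive update involves a disjunction over at most $O(H)$ previous entries (when $v_i(x_j)$ ties the parameter $h_i$ or $d_i$), so table-filling runs in $O(m^3 H^7)$ time; table-checking scans the $O(m^2 H^6)$ final-row entries in $O(1)$ each. The total is therefore polynomial in $m$ and $H$, i.e.\ pseudo-polynomial in the input size.

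The main delicacy I expect lies in the boundary case where an agent's bundle contains no good (so $h_i = 0$ by default rather than by realisation) or no chore (so $d_i = 0$). The extremal-item reduction then depends on reading the corresponding \eqx{} inequality as vacuously true, and the proof must verify that either the algorithm's remaining inequality suffices in such degenerate signatures, or else the scan can be augmented to treat these empty-type cases separately without changing the overall complexity. An analogous, purely syntactic clean-up underlies the extension to any constant number $n$ of agents: the signature becomes a $(2n+2)$-tuple, the scan ranges over every ordered pair, and the table size scales as $O\bigl(m\cdot (mH)^n \cdot H^{2n}\bigr)$, which is pseudo-polynomial whenever $n$ is fixed.
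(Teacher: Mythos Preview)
Your proposal is correct and follows essentially the same approach as the paper: invoke Lemma~\ref{lemma:LemmaTablefill} for correctness of the table, reduce the \eqx{} check to the extremal-item inequalities, and count table entries for the $O(m^3H^7)$ runtime. You are in fact more careful than the paper in flagging the boundary case where $d_i=0$ (agent $i$ holds no chore) or $h_i=0$ (agent $i$ holds no good); the paper's own proof does not address this, and your suggested fix---treating these as sentinel values so the corresponding inequality is read as vacuous---is the right patch and does not affect the asymptotic complexity.
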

\begin{proof}
In the first step, Algorithm \ref{algo:TableCheck} invokes table-filling algorithm which returns a binary-valued table $A$. We saw in Lemma \ref{lemma:LemmaTablefill} that this table is computed correctly by the table-filling algorithm. 

Algorithm \ref{algo:TableCheck} then verifies \eqx{} conditions for all entries $A[m,w_1,w_2,h_1,d_1,h_2,d_2]=1$, i.e., if $w_2 > w_1$ (agent 2 values its bundle higher than agent 1 values its own) then it checks whether $w_2-h_2 \leq w_1$ (removal of minimum valued good from agent 2's bundle causes inquity to vanish) and $w_1-d_1 \geq w_2$ (removal of maximum valued chore from agent 1's bundle causes inquity to vanish) are satisfied or not and similarly for the case $w_1 > w_2$, it checks whether $w_1-h_1 \leq w_2$ and  $w_2-d_2 \geq w_1$ are satisfied or not. The algorithm returns 1 if \eqx{} conditions are satisified for some $w_i,h_i$ and $d_i$ for $i \in \{1,2\}$ and returns 0 otherwise.
This completes the proof of correctness.

We now show that the Algorithm \ref{algo:TableCheck} terminates in pseudo-polynomial time. The algorithm first invokes table-filling algorithm which takes $O(m^{3}H^{7})$ time to complete its execution. It then verifies \eqx{} conditions and decides the existence of \eqx{} allocation in the given instance. This part of the execution takes $O(m^{2}H^{6})$ time. Thus, the overall execution time of the algorithm is $O(m^{3}H^{7})$. 
\end{proof}
Note that if there are $c>2$ agents, the table looks like $A[j,w_1,w_2,\hdots,w_c,h_1,d_1,h_2,d_2,\hdots,h_c,d_c]$ and the runtime will be $O(m^{c+1}H^{3c+1}+m^{c}H^{3c})$, where $O(m^{c+1}H^{3c+1})$ and $O(m^{c}H^{3c})$ are the running times of table-filling and table-checking algorithms respectively. 

\subsection{Strong \NP-hardness for Arbitrary Agents with Additive Valuations}
\label{sec:StrongNPHardness}
We now show that if the number of agents is part of the input, determining whether there exists an \eqx{} allocation is strongly \NP-hard. The result is established using a reduction from 3-\Part{} which is known to be strongly \NP-hard. 

\begin{restatable}{theorem}{TheoremStrongNP-hardness}
\label{theorem:strongNP-hard}
Given any fair division instance with additive subjective valuations, determining whether there exists an \eqx{} allocation is strongly \NP-hard.
\end{restatable}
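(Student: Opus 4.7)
The plan is a polynomial-time reduction from 3-\Part{}, which is known to be strongly \NP-hard. An instance of 3-\Part{} consists of positive integers $a_1, \ldots, a_{3n}$ with $\sum_k a_k = nB$ and $B/4 < a_k < B/2$, and asks whether the integers admit a partition into $n$ triples each summing to $B$. Since $B$ is polynomially bounded in the input size of a strongly \NP-hard 3-\Part{} instance, all values appearing in my constructed fair division instance will be polynomially bounded, so strong \NP-hardness transfers.

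The construction uses $n$ agents and two families of items. First, $3n$ \emph{value} goods $g_k$ with $v_i(g_k) = a_k$ for every agent $i$. Second, for each agent $i$, I will attach a pair of \emph{witness} items $z_i, \tilde{z}_i$ defined by $v_i(z_i) = +1$, $v_j(z_i) = -L$ for $j \neq i$, and $v_i(\tilde{z}_i) = -1$, $v_j(\tilde{z}_i) = +L$ for $j \neq i$, where $L$ is a sufficiently large polynomial (for concreteness, $L = nB + 2$). The pair $\{z_i, \tilde{z}_i\}$ is engineered so that when both items are correctly placed at agent $i$, their contributions cancel in $v_i(A_i)$, while they jointly furnish a $+1$ good and a $-1$ chore whose \eqx{} conditions deliver two-sided tightness $|v_i(A_i) - v_j(A_j)| \leq 1$ for every other agent $j$.

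The forward direction is immediate: a 3-\Part{} solution $(T_1, \ldots, T_n)$ induces the allocation $A_i = \{z_i, \tilde{z}_i\} \cup \{g_k : k \in T_i\}$ with $v_i(A_i) = B$ for every $i$, which is equitable and therefore \eqx{}. For the reverse direction, I will show that every \eqx{} allocation corresponds to a 3-\Part{} solution. The central claim is that in any \eqx{} allocation the witness pair satisfies $\{z_i, \tilde{z}_i\} \subseteq A_i$ for every $i$. Once this is established, writing $s_i := \sum_{k: g_k \in A_i} a_k$ gives $v_i(A_i) = s_i$ (the witness pair cancels), and the \eqx{} conditions on the good $z_i$ and chore $\tilde{z}_i$ yield $|s_i - s_j| \leq 1$; combining this with $\sum_i s_i = nB$ and the integrality of the $a_k$'s forces $s_i = B$ for every $i$, and the range condition $B/4 < a_k < B/2$ then forces each bundle to contain exactly three value goods, yielding a 3-\Part{} solution.

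The main obstacle is to establish the central claim that every witness pair is correctly placed. Simple misplacements admit a direct pairwise contradiction: if $z_i \in A_j$ for some $j \neq i$ while $z_k \in A_k$ for some $k$, then $v_j(A_j)$ absorbs a $-L$ term (bounding $v_j(A_j) \leq -1$ for $L = nB + 2$) whereas $v_k(A_k)$ has the $+1$ contribution from $z_k$, so the \eqx{} condition on the good $z_k$ gives $v_k(A_k) - 1 \leq v_j(A_j) \leq -1$, contradicting $v_k(A_k) \geq 0$. The delicate cases are those in which multiple witnesses are misplaced simultaneously so that $+L$ and $-L$ contributions partially cancel at the wrong agents; these I will handle by a careful case analysis on the multiset of witnesses arriving at each agent, combining the value-good \eqx{} bounds (using that each $a_k > B/4$) with a global total-value accounting $\sum_i v_i(A_i) = nB + (\text{witness contributions})$. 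The polynomial magnitude of $L$ relative to $nB$ will guarantee that every non-canonical witness configuration either triggers an immediate pairwise contradiction or violates the integer arithmetic enforced by the \eqx{} constraints, completing the reverse direction and establishing strong \NP-hardness.
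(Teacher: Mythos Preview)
Your reduction has a genuine gap: the central claim that every witness pair must be correctly placed is false, and no amount of case analysis can rescue it.

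Consider the following allocation in your constructed instance. Cyclically shift all witness pairs, so that agent $i$ receives $\{z_{i+1},\tilde z_{i+1}\}$ (indices modulo $n$), and then allocate the value goods $g_1,\ldots,g_{3n}$ among the $n$ agents by the standard greedy procedure for identical additive goods (sort by $a_k$ decreasing and always give the next good to a currently poorest agent). For agent $i$, the misplaced witness pair contributes $-L+L=0$, so $v_i(A_i)=s_i:=\sum_{k:\,g_k\in A_i} a_k$. Now check the \eqx{} conditions at agent $i$:
\begin{itemize}
\item removing the good $\tilde z_{i+1}$ gives $s_i-L<0\le s_j$;
\item removing the chore $z_{i+1}$ gives $s_i+L>nB\ge s_j$;
\item removing a value good $g_k\in A_i$ gives $s_i-a_k\le s_j$, which is exactly the goods-only \eqx{} condition.
\end{itemize}
Thus the witness items impose no non-trivial constraints at all once the pairs are swapped, and the only remaining requirement is that $(s_1,\ldots,s_n)$ be an \eqx{} allocation of the value goods alone. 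But \eqx{} allocations for monotone additive (indeed identical) goods \emph{always} exist, regardless of whether the underlying 3-\Part{} instance is a yes- or a no-instance. Hence every no-instance of 3-\Part{} maps to a fair-division instance that still admits an \eqx{} allocation, and the reverse direction fails. Your global accounting does not help either: in the swapped allocation $\sum_i v_i(A_i)=nB$, identical to the canonical allocation.

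The paper avoids this pitfall with an asymmetric construction: it introduces a single extra agent $n{+}1$ who values all original items at a huge negative $-\Delta$ and must therefore receive exactly the two special items, which are calibrated so that the \eqx{} condition at agent $n{+}1$ forces $v_i(A_i)\ge T$ for every original agent, and hence $v_i(A_i)=T$ exactly. The asymmetry is what prevents the cancellation trick that defeats your symmetric witness pairs.
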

\begin{proof}
The proof proceeds via a reduction from 3-\Part{}, which is known to be strongly \NP-hard. In 3-\Part{}, we are given a multiset of $3n$ positive integers $S=\{a_1,a_2,\hdots,a_{3n}\}$ where $n \in \mathbb{N}$, the sum of all integers in $S$ is $nT$, i.e., $\sum_{i = 1}^{3n} a_i=nT$ and each integer $a_i$ satisfies $a_i \in \big(\frac{T}{4},\frac{T}{2}\big)$. The objective is to find a partition of $S$ into $n$ subsets $S_1,S_2,\hdots,S_n$ such that the sum of integers in each subset is $T$. Note that the condition $a_i \in \left(\frac{T}{4},\frac{T}{2}\right)$ ensures that each subset $S_i$ is of cardinality $3$. 

Given an instance of 3-\Part{}, we construct a fair division instance as follows. There are $n+1$ agents $1,2,\hdots,n+1$ and $3n+2$ items $x_1,x_2,\hdots,x_{3n+2}$. Agents $1,\hdots,n$ are called ``original" agents, and items $x_1,\hdots,x_{3n}$ are called ``original" items. Also, let $\Delta \gg 10nT$ be a sufficiently large positive integer.

Now, we define the valuations for the agents. We have $v_i(x_j)=a_j$ for each $i \in [n]$ and $j \in [3n]$. Hence, the items $x_1,\ldots, x_{3n}$ are identically valued by the agents $1,\hdots,n$ and their values correspond to the integers in $S$. Items $x_{3n+1}$ and $x_{3n+2}$ are also identically valued by the agents in $[n]$, with values $\frac{\Delta}{10}$ and $-\frac{\Delta}{10}$, respectively: $v_i(x_{3n+1})=\frac{\Delta}{10}$ and $v_i(x_{3n+2})=-\frac{\Delta}{10}$, for each $i \in [n]$. 

The agent $n+1$ values the items $x_1,\hdots,x_{3n}$ at $-\Delta$, i.e.,  $v_{n+1}(x_j)=-\Delta$ for each $j \in [3n]$. Further, agent $n+1$ values the remaining items $x_{3n+1}$ and $x_{3n+2}$ at $T$ and $1$, respectively. The valuations for the constructed fair division instance are depicted in Table \ref{table:InstanceStrongNP}.

\begin{table}[!ht]
\centering
\renewcommand{\arraystretch}{1.5}
    \begin{tabular}{| c | c | c | c | c | c | c |}
    \hline
     & $x_1$ & $x_2$ & $\cdot \cdot \cdot$ & $x_{3n}$ & $x_{3n+1}$ & $x_{3n+2}$\\ \hline
     Agent 1 & $a_1$ & $a_2$ & $\cdot \cdot \cdot$ & $a_{3n}$ & $\frac{\Delta}{10}$ & $-\frac{\Delta}{10}$ \\ \hline
     Agent 2 & $a_1$ & $a_2$ & $\cdot \cdot \cdot$ & $a_{3n}$ & $\frac{\Delta}{10}$ & $-\frac{\Delta}{10}$ \\ \hline
 $\cdot$ & $\cdot$ & $\cdot$ & $\cdot \cdot \cdot$ & $\cdot$ & $\cdot$ & $\cdot$ \\ \hline
$\cdot$ & $\cdot$ & $\cdot$ & $\cdot \cdot \cdot$ & $\cdot$ & $\cdot$ & $\cdot$ \\ \hline    
   Agent $n$ & $a_1$ & $a_2$ & $\cdot \cdot \cdot$ & $a_{3n}$ & $\frac{\Delta}{10}$ & $-\frac{\Delta}{10}$ \\ \hline 
Agent $n+1$ & $-\Delta$ & $-\Delta$ & $\cdot \cdot \cdot$ & $-\Delta$ & $T$ & $1$ \\ \hline
    \end{tabular}
    \caption{Fair division instance for establishing strong \NP-hardness} 
     \label{table:InstanceStrongNP}
\end{table}

We show that the 3-\Part{} instance has a solution iff the constructed fair division instance admits an \eqx{} allocation.\\

\noindent
{\it Forward Direction:} Suppose $S_1,S_2,\hdots,S_n$ is a solution of the 3-\Part{} instance. Then an \eqx{} allocation $A=(A_1,\hdots,A_{n+1})$ can be constructed as follows. If integer $a_j$ belongs to the subset $S_i$ then add item $x_j$ to the agent $i$'s bundle $A_i$. Assign items $x_{3n+1}$ and $x_{3n+2}$ to the agent $n+1$. In the resulting allocation, agents in $[n]$ have equal value $T$ for their bundles while agent $n+1$ values its bundle at $T+1$. This is clearly an \eqx{} allocation.\\

\noindent
{\it Reverse Direction:} Let $\alloc=(A_1, \ldots, A_n)$ be an \eqx{} allocation. We first claim that no original item is assigned to agent $n+1$. Assume, towards a contradiction, that some original item $x_j$ is assigned to the last agent $n+1$. Since $v_{n+1}(x_j)=-\Delta$, we have $v_{n+1}(A_{n+1}) < -\frac{\Delta}{2}$. In fact, if along with $x_j$, another original item, say $x_k$, is assigned to agent $n+1$, then even after the removal of $x_k$ we would have $v_{n+1}(A_{n+1} \setminus \{x_k \})  < -\frac{\Delta}{2}$. On the other hand, for each original agent $i \in [n]$, it must hold that $v_i(A_i) \geq - \frac{\Delta}{10}$. Hence, if $A_{n+1}$ contains two (or more) original goods, then the \eqx{} condition cannot hold for agent $n+1$. This observation implies that, in allocation $\alloc$, all the original goods, besides $x_j$, must have been assigned among the original agents. Therefore, there exists an agent $\hat{i} \in [n]$ whose bundle $A_{\hat{i}}$ is of size at least two. Further, considering the original agents' valuations, we obtain that $A_{\hat{i}}$ must contain a good $x$ and $v_{\hat{i}} (A_{\hat{i}} \setminus \{x \}) \geq - \frac{\Delta}{10} $. However, this inequality and the bound $v_{n+1}(A_{n+1}) < -\frac{\Delta}{2}$ contradict the fact that $\alloc$ is an \eqx{} allocation. Therefore, by way of contradiction, we have that no original item is assigned to agent $n+1$. 

We next show that both items $x_{3n+1}$ and $x_{3n+2}$ must be assigned to agent $n+1$.

\begin{itemize}
\item[(\rm i)] Assume, towards a contradiction, that $x_{3n+2} \notin A_{n+1}$. Since the original agents are symmetric, we can, without loss of generality, further assume that    $x_{3n+2} \in A_1$. We consider two cases, depending on if $x_{3n+1} \in A_1$ or not. In the first case (i.e., if $x_{3n+1} \in A_1$), we have $A_{n+1} = \emptyset$ and, hence, $v_{n+1}(A_{n+1}) = 0$. Here, for \eqx{} to hold, it must hold that removing a good from any other agent $\{1, \ldots, n\}$ should reduce its value to be at most zero. In particular, each of the original agents can get at most one original item. However, such a containment requirement cannot hold, since there are $3n$ original items (and agent $n+1$ cannot get any original items). Therefore, we obtain a contradiction, when both $x_{3n+2} \in A_1$ and $x_{3n+1} \in A_1$. 

In remains to address the case wherein $x_{3n+2} \in A_1$ and $x_{3n+1} \notin A_1$. Here, $v_1(A_1) < 0$, and, for \eqx{} to hold, it must hold that removing a good from any remaining agent $i \in \{2, \ldots, n\}$ should leave agent $i$ with a negatively valued bundle. {Note that this requirement does not hold even if $A_i = \emptyset$, since $v_i(A_i) = 0$ and $v_1(A_1)<0$.} Moreover, for the original agents every item, except $x_{3n+2}$, is a good and agent $n+1$ cannot be assigned any original item. Since such an allocation (of the $3n$ original items among $n$ original agents) is not possible, by way of contradiction, we get that $x_{3n+2} \notin A_{n+1}$.

\item[(\rm ii)] Next, assume, towards a contradiction, that $x_{3n+1} \notin A_{3n+1}$. Then, $A_{n+1}=\{x_{3n+2}\}$ and we have $v_{n+1}(A_{n+1})=1$. Further, items $x_1,\hdots,x_{3n+1}$ must be allocated to the $n$ original agents, who have value at least $1$ for each of the items. Hence, in such a case, some original agent must receive at least $4$ of these items. Removing an item still leaves her with a bundle of value at least $3$ and, hence, $\alloc$ cannot be \eqx{}. Therefore, we get that $x_{3n+1} \in A_{3n+1}$.
\end{itemize}

\noindent
The analysis above implies that, in \eqx{} allocation $\alloc$, items $x_1,\hdots,x_{3n}$ must be assigned to the agents in $[n]$, while items $x_{3n+1}$ and $x_{3n+2}$ must be assigned to the agent $n+1$. 

Furthermore, each agent in $[n]$ must possess a bundle of value at least $T$, since $v_{n+1}\left(A_{n+1} \setminus \{ x_{3n+2} \} \right) =  v_{n+1}(x_{3n+1})=T$. Since $\sum_{i = 1}^{3n} a_i=nT$, each agent must get value exactly $T$ from her bundle. Hence, an \eqx{} allocation induces a solution of the 3-\Part{} instance.

\end{proof}

Given that under subjective valuations \eqx{} allocations are not guaranteed to exist (Theorem \ref{theorem:no-mas-eqx}), we will focus on objective additive valuations in the remainder of this section. We will use $C$ to denote the set of chores and $G$ to denote the set of goods. Since valuations are objective, $M = G \cup C$. Further, given an allocation $\alloc$, we say agent $i$ has a goods violation if, for some good $g \in A_i$, we have $v_i(A_i \setminus \{g\}) > \min_k v_k(A_k)$. Similarly, agent $i$ has a chores violation if $v_i(A_i \setminus \{c\}) < \max_k v_k(A_k)$  for some chore $c \in A_i$.

\subsection{Two Agents with Additive Objective Valuations}
\label{sec:TwoAgents}

The section establishes that Algorithm \ref{algo:TwoAgents} efficiently computes an \eqx{} allocation between two (nonidentical) agents. A key property utilized in the analysis is that, if , in any iteration of Algorithm \ref{algo:TwoAgents}, item $g^*$ is assigned to agent $p$, then any chore $c$ assigned previously (i.e., in an earlier iteration) to the other agent $r$ must have greater absolute value. Similarly, if $c^*$ is assigned to agent $r$ in the considered iteration, any good $g$ assigned previously to the other agent $p$ must have greater absolute value. Using these bounds we show that the \eqx{} criterion is maintained as an invariant as the algorithm iterates.  

\begin{algorithm}[!htp]
  \caption{Two-Way Greedy Algorithm}
  \label{algo:TwoAgents}
  \textbf{Input:} Instance $(N,M,\mathcal{V})$ with two agents and additive objective valuations. \\
  \textbf{Output:} \eqx{} allocation $\alloc$.
  \begin{algorithmic}[1]  
  \STATE Initialize $\alloc = (\emptyset, \ldots, \emptyset)$ and item set $U :=  M$.
  \WHILE{$U \neq \emptyset$}
     \STATE Write ${\displaystyle r \in \arg \max_{i \in N} v_i(A_i)}$ and let $p$ be the other agent ($p \neq r$). 
      \STATE Set item ${\displaystyle g^*  \in \arg \max_{g \in U \cap G} v_p(g)}$ and item ${\displaystyle c^*  \in \arg \min_{c \in U \cap C} v_r(c)}$.
     \COMMENT{$g^*$ is most valuable good for $p$ and $c^*$ is least valuable chore for $r$.}
     \IF{$|v_p(g^{*})| > \vert v_r(c^{*})\vert$} \label{line:abs-comp}
        \STATE Update $A_p \leftarrow A_p \cup \{g^{*}\}$ and $U \leftarrow U \setminus \{g^{*}\}$.
     \ELSE
        \STATE Update $A_r \leftarrow A_r \cup \{c^{*}\}$ and $U \leftarrow U \setminus \{c^{*}\}$.
     \ENDIF
  \ENDWHILE
  \RETURN Allocation $\alloc=(A_1,\hdots,A_n)$
  \end{algorithmic}
\end{algorithm}

\begin{restatable}{theorem}{TheoremTwoAgents}
\label{theorem:two-agents}
Given any fair division instance $(N, M, \mcV)$ with two agents that have additive objective valuations, Algorithm \ref{algo:TwoAgents} computes an \eqx{} allocation in polynomial time.
\end{restatable}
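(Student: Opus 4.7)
The plan is to prove correctness by induction on the number of outer iterations of Algorithm~\ref{algo:TwoAgents}, maintaining as the inductive invariant that the current partial allocation is \eqx{}; the base case (empty allocation) is vacuous. Polynomial runtime is immediate: each iteration removes one item from $U$ with $O(m)$ work to locate $g^*$ and $c^*$, so the algorithm terminates in $O(m^2)$ time. The substance lies in the inductive step.

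The cornerstone will be a structural lemma formalizing the observation flagged in the excerpt. Specifically, I aim to show: at any iteration $t$ with current roles $p$ (non-richest) and $r$ (richest) and candidates $g^*$, $c^*$, every good $g$ already in $A_p$ satisfies $v_p(g) \geq \max\{v_p(g^*), |v_r(c^*)|\}$, and every chore $c$ already in $A_r$ satisfies $|v_r(c)| \geq \max\{v_p(g^*), |v_r(c^*)|\}$. This will follow from two observations. First, since the unassigned set $U$ only shrinks, any item in $U$ at time $t$ was also in $U$ at every earlier iteration $t'$. Second, because Line~\ref{line:abs-comp} always routes goods to the non-richest agent and chores to the richest, whenever an earlier iteration $t'$ added $g$ to $A_p$, the physical agent $p$ was in role $p$ at $t'$ as well (and symmetrically for any $c \in A_r$ and role $r$). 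Combining these, the greedy choice of $g$ at $t'$ yields $v_p(g) \geq v_p(g^*)$ (since $g^* \in U^{(t')}$), while the comparison on Line~\ref{line:abs-comp} at $t'$ gives $v_p(g) > |v_r(\widehat{c})|$ where $\widehat{c}$ was the then-candidate chore for $r$; greedy selection of $\widehat{c}$ further yields $|v_r(\widehat{c})| \geq |v_r(c^*)|$ since $c^*$ was also in the pool at $t'$. The case of $c \in A_r$ is entirely symmetric.

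Given the structural lemma, the inductive step reduces to a short case analysis. Suppose $g^* \to p$ at iteration $t$; the \eqx{} conditions for the updated allocation split into sub-cases indexed by (old versus new item, bundle $A_p$ versus $A_r$, good versus chore). Most are discharged using the inductive hypothesis and the sign $v_p(g^*) \geq 0$: for example, old chores in $A_p$ or old goods in $A_r$ trivially remain \eqx{}-compliant. The two nontrivial sub-cases---removing an old good $g \in A_p$ from $A_p \cup \{g^*\}$, and removing an old chore $c \in A_r$---reduce, after rearranging and using $v_r(A_r) \geq v_p(A_p)$, to $v_p(g) \geq v_p(g^*)$ and $|v_r(c)| \geq v_p(g^*)$ respectively, both supplied by the structural lemma. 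The case $c^* \to r$ is symmetric and uses the $|v_r(c^*)|$ side of the lemma. The main obstacle, in my view, is the role bookkeeping: with two agents the labels $p$ and $r$ can swap between iterations, but the observation that every assignment's direction (good to non-richest, chore to richest) uniquely determines the recipient's role at assignment time is exactly what makes the structural lemma go through.
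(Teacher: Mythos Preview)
Your proposal is correct and follows essentially the same approach as the paper: both arguments proceed by induction on iterations, hinge on the same structural observation (that past goods in $A_p$ and past chores in $A_r$ dominate the current candidates in absolute value, because roles at the time of assignment were determined by the assignment direction), and finish with a short case analysis. The only cosmetic difference is that the paper splits the inductive step into four sub-cases depending on whether the roles swap after the assignment, whereas you handle swap and non-swap uniformly by indexing on the item's location; the underlying inequalities are identical.
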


\begin{proof}
We provide a proof of the theorem via induction on the number of allocated items. Initially, the algorithm starts with an empty allocation, which is trivially \eqx{}. Now, fix any iteration of the algorithm and let $\alloc$ be the partial allocation maintained at the beginning of the iteration. Partial allocation $\alloc$ is \eqx{} by the induction hypothesis. Also, for $p$ and $r$ as defined in the considered iteration, we have $v_p(A_p) \le v_r(A_r)$.

Note that if, in the iteration, item $g^*$ is assigned to agent $p$, then any chore $c$ assigned previously (i.e., in an earlier iteration) to the other agent $r$ must have greater absolute value. This follows from the algorithm's selection criterion (Line \ref{line:abs-comp}) and the fact that when chore $c$ is assigned to agent $r$, the two agents $p$ and $r$ were similarly the poorer and the richer one, respectively. That is, $|v_p(g^*)| \le |v_r(c)|$ for all chores $c \in A_r$. Similarly, if $c^*$ is assigned to agent $r$ in the considered iteration, any good $g$ assigned previously to the other agent $p$ must have greater absolute value, $|v_r(c^*)| \leq |v_p(g)|$ for all goods $g \in A_p$.\footnote{Recall that in the current context the two agents have objective valuations, i.e., the classification of items as goods or chores is consistent under the two valuations $v_1$ and $v_2$.} 

In the considered iteration, either good $g^*$ gets assigned to agent $p$ or chore $c^*$ is assigned to agent $r$. Next, we complete the inductive argument considering these cases. \\

\noindent
{\it Case {\rm I}a : Good $g^*$ is assigned to agent $p$ and $v_p(A_p \cup g^*) \le v_r(A_r)$.} By the induction hypothesis, for each chore $c \in A_p$, we have, $v_p(A_p \setminus \{ c \}) \ge v_r(A_r)$. Hence, $v_p(A_p \cup g^* \setminus \{ c \}) \ge v_r(A_r)$, i.e., there is no chore violation towards \eqx{}. Also, by the induction hypothesis, for each good $g \in A_r$, it holds that $ v_r(A_r \setminus \{ g\}) \leq v_p(A_p) \leq v_p(A_p \cup g^*)$. Hence, in this case, \eqx{} continues to hold even after the assignment of $g^*$.  \\

\noindent
{\it Case {\rm I}b: Good $g^*$ is assigned to agent $p$ and $v_p(A_p \cup g^*) > v_r(A_r)$.} Since the agents select the goods greedily, each good in $A_p$ has value at least $v_p(g^*)$. Hence, for each good $g \in A_p$, the following inequality holds: $v_p(A_p \cup g^* \setminus \{g\}) \le  v_p(A_p) \leq v_r(A_r)$. We obtain that there are no goods violation towards \eqx{}. Next, recall that for each chore $c \in A_r$ the inequality $|v_r(c)| \geq v_p(g^*)$ holds. This inequality leads to the desired bound for each chore $c \in A_r$:
\begin{align*}
v_r(A_r \setminus \{ c \}) & = v_r(A_r) + |v_r(c)| \geq v_p(A_p) + v_p(g^*) \\
& = v_p(A_p \cup g^*).
\end{align*} 
Hence, there are no chore violations as well and, in the current case, the updated allocation is \eqx{}. \\

\noindent
{\it Case {\rm II}a : Chore $c^*$ is assigned to agent $r$ and $v_p(A_p) \ge v_r(A_r \cup c^*)$.} Here, note that each chore $c$ previously selected by agent $r$ has absolute value at least $|v_r(c^*)|$. This bound gives us $v_p(A_p) \leq  v_r(A_r) \leq v_r(A_r \cup c^* \setminus \{c\})$, for all chores $c \in A_r$. Hence,  there are no chore violations in the updated allocation. Also, as mentioned previously, $|v_r(c^*)| \leq  v_p(g) $ for all goods $g \in A_p$. Therefore, we obtain that the updated allocation upholds the \eqx{} criterion with respect to goods' removal:
\begin{align*}
v_p(A_p \setminus \{ g \}) & = v_p(A_p) - v_p(g) \le v_r(A_r) + v_r(c^*) \\ 
& = v_r(A_r \cup c^*).
\end{align*}

\noindent
{\it Case {\rm II}b : Chore $c^*$ is assigned to agent $r$ and $v_p(A_p) < v_r(A_r \cup c^*)$.} The induction hypothesis gives us $v_r(A_r \setminus \{g\}) \leq v_p(A_p)$, for each good $g \in A_r$. Hence, there are no goods violation in the updated allocation: $ v_r(A_r \cup c^* \setminus \{ g \}) \leq v_r(A_r \setminus \{g\}) \leq v_p(A_p)$. The induction hypothesis also implies that $v_p(A_p \setminus \{ c\}) \ge v_r(A_r)$ for each chore $c \in A_p$. Hence, there are no chore violations in the resulting allocation: $v_p(A_p \setminus \{ c\}) \ge v_r(A_r) \geq v_r(A_r \cup c^*)$. 

Overall, we obtain that \eqx{} criterion is maintained as an invariant as the algorithm iterates. This completes the proof. 

\end{proof}

\subsection{Identically-Valued Chores}
\label{sec:IdenticalChores}
This section shows that \eqx{} allocations are guaranteed to exist when the agents have additive, objective valuations and they value the chores $c$ identically, i.e., $v_i(c) = v_j(c)$ for all agents $i, j \in N$.  

As noted, $C$ denotes the set of chores, and $G$ is the set of goods. For each chore $c \in C$, we will write $v_c < 0 $ to denote the common value of the chore among the agents.  

For intuition for the proof, consider an allocation $\alloc=(A_1, \ldots, A_n)$ that is not \eqx. There are two possibilities. There could be a chores violation --- there exist agents $i$ and $j$ and a chore $c \in A_i$ such that $v_i(A_i \setminus \{ c\}) < v_j(A_j)$. Here $c$ is called the violating chore. Or, there could be a goods violation --- there exist agents $i$ and $j$ and a good $g \in A_j$ such that $v_i(A_i) < v_j(A_j \setminus \{g\})$. Here $g$ a violating good. 

Our proof is based on the observation that transferring $c$ to $A_j$ in case of a chores violation and $g$ to $A_i$ under a goods violation leads to a lexicographic improvement in the tuple of values. The lexicographic order here additionally incorporates the sizes of the assigned bundles and the agents' indices for tie-breaking. 

Specifically, for any allocation $\mathcal{X} = (X_1, \ldots, X_n)$, we define permutation (ordering) $\sigma^{\mathcal{X}} \in \mathbb{S}_n$ over the $n$ agents such that 
\begin{itemize}
\item[(i)] Agents $i$ with lower values, $v_i(X_i)$, receive lower indices in $\sigma^{\mathcal{X}}$. 
\item[(ii)] Among agents with equal values, agents $i$ with lower bundle sizes, $|A_i|$, receive lower indices in $\sigma^{\mathcal{X}}$. 
\item[(iii)] Then, among agents with equal values and equal number of items, we order by index $i$.  
\end{itemize}

Plaut and Roughgarden~\cite{PlautR20} used a similar idea in their proof of the existence of \efx{} allocations for \emph{identical} valuations. They noted, via an example, that the lexicographic order over just the tuple of values does not suffice and a strengthening which incorporates bundle sizes is required. In particular, they introduced the $\precplus$ comparison operator, which we detail in Algorithm \ref{alg:Precplus}. Unlike their work, we use the $\precplus$ operator to obtain existential guarantees for \eqx{} allocations even when the valuations are nonidentical over the goods (but identical for chores).  

Note that if two agents have the same value and the same number of items, then Plaut and Roughgarden's results hold for any arbitrary but consistent tie-breaking between agents. In the definition of $\precplus$ (see Algorithm \ref{alg:Precplus}), we use the index of each agent $i \in [n]$ as a tie-breaker to compare two agents that are otherwise identical (have the same value for their own bundles and the same number of items). 

\begin{algorithm}[!htp]
\caption{$\precplus$ comparison operator}
\label{alg:Precplus}
\textbf{Input}: Allocations $\alloc$ and $\mathcal{B}$. \\
\textbf{Output}: \textbf{True} if $\alloc \precplus \mathcal{B}$, else return \textbf{False}. \\
\vspace*{-10pt}
\begin{algorithmic}[1] 
\STATE Let $\sigma^\alloc$ and $\sigma^{\mathcal{B}}$ be the defined permutations of the agents associated with allocations $\alloc$ and $\mathcal{B}$, respectively. 

\FORALL{$\ell \in [n]$}
\STATE Set $i = \sigma^\mathcal{A}(\ell)$ and set $j = \sigma^{\mathcal{B}}(\ell)$. \COMMENT{$i$ and $j$ are the $\ell$th agents in the permutations.}
\IF {$v_i(A_i) \neq v_j(B_j)$} 
\RETURN $v_i(A_i) < v_j(B_j)$.
\ELSIF {$|A_i| \neq |B_j|$} 
\RETURN $|A_i| < |B_j|$.
\ELSIF{$i \neq j$}
\RETURN $i < j$.
\ENDIF
\ENDFOR
\end{algorithmic}
\end{algorithm}

\begin{theorem}[Theorem 4.1,~\cite{PlautR20}]
The comparison operator $\precplus$ induces a total order on the set of all allocations.
\label{thm:LeximinGivesTotalOrder}
\end{theorem}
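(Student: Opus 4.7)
The plan is to prove the theorem by reducing the procedure of Algorithm~\ref{alg:Precplus} to the standard lexicographic comparison on a canonical sequence of triples associated with each allocation. For each allocation $\alloc = (A_1, \ldots, A_n)$, define the triple $\tau^{\alloc}_i := (v_i(A_i), |A_i|, i)$ for each agent $i \in [n]$, and let $t^{\alloc} = (t^{\alloc}_1, \ldots, t^{\alloc}_n)$ be the sequence obtained by sorting $\tau^{\alloc}_1, \ldots, \tau^{\alloc}_n$ in increasing lexicographic order (comparing first by value, then by bundle size, then by agent index). First, I would verify that the ordering of agents in this sorted sequence coincides with the permutation $\sigma^{\alloc}$ specified in the construction before Algorithm~\ref{alg:Precplus}: conditions (i), (ii), (iii) are exactly the three lex coordinates. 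Since agent indices are pairwise distinct, the third coordinate breaks every remaining tie, so $\sigma^{\alloc}$ is the \emph{unique} permutation of $[n]$ sorting the triples lexicographically.

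Second, I would observe that the loop of Algorithm~\ref{alg:Precplus} is precisely the left-to-right lexicographic comparison of the sequences $t^{\alloc}$ and $t^{\mcB}$ under the triple order just described. At position $\ell$ the algorithm examines the $\ell$-th triples $t^{\alloc}_\ell$ and $t^{\mcB}_\ell$, returns the strict comparison of the first differing coordinate (value, then size, then index), and continues only when the triples are equal. Therefore $\alloc \precplus \mcB$ holds if and only if $t^{\alloc} <_{\mathrm{lex}} t^{\mcB}$.

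Third, I would invoke the standard fact that the lexicographic order on finite sequences over a totally ordered alphabet is a strict total order: for any two such sequences exactly one of $<_{\mathrm{lex}}$, $=$, $>_{\mathrm{lex}}$ holds, and $<_{\mathrm{lex}}$ is transitive and asymmetric. Pushed back through the equivalence of the previous step, this gives that $\precplus$ is asymmetric and transitive on allocations, and trichotomous modulo the equivalence $t^{\alloc} = t^{\mcB}$. Because the agent-index coordinate appears in every triple, $t^{\alloc} = t^{\mcB}$ holds if and only if $v_i(A_i) = v_i(B_i)$ and $|A_i| = |B_i|$ for every agent $i$; in that case neither $\alloc \precplus \mcB$ nor $\mcB \precplus \alloc$. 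Hence $\precplus$ descends to a strict total order on the equivalence classes defined by this relation, which is the intended meaning of the theorem.

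The only real subtlety is establishing the uniqueness of $\sigma^{\alloc}$ in the first step: without the agent-index tie-breaker, two agents with identical values and bundle sizes could be ordered either way, leaving the comparison ill-defined and potentially destroying asymmetry. Once rule (iii) is used to promote the agent order on triples to a genuine strict total order on $[n]$, the rest of the argument is a routine appeal to the properties of lexicographic comparisons.
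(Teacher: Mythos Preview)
The paper does not give its own proof of this theorem; it is quoted verbatim as Theorem~4.1 of Plaut and Roughgarden~\cite{PlautR20} and used as a black box to guarantee the existence of a \lmplus{} allocation. So there is nothing in the paper to compare your argument against.

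Your proof sketch is sound. Encoding each agent by the triple $(v_i(A_i), |A_i|, i)$, sorting, and then reducing Algorithm~\ref{alg:Precplus} to the left-to-right lexicographic comparison of the sorted tuples is exactly the right idea, and your observation that the agent-index coordinate forces a unique sort is the key point. One small caveat worth making explicit: as you yourself note, two distinct allocations $\alloc \neq \mcB$ can have identical value/size profiles, in which case Algorithm~\ref{alg:Precplus} falls through without returning, so neither $\alloc \precplus \mcB$ nor $\mcB \precplus \alloc$. Strictly speaking $\precplus$ is therefore a total \emph{preorder} on allocations (equivalently, a total order on the equivalence classes you identify), not a total order on the set of allocations itself. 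You flag this and call it ``the intended meaning,'' which is fine and is all the paper needs: a total preorder on a finite set has maximal elements, so a \lmplus{} allocation exists.
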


We say an allocation $\alloc$ is a \lmplus{} allocation if $\mathcal{B} \precplus \alloc$ for all allocations $\mathcal{B}$. Theorem \ref{thm:LeximinGivesTotalOrder} ensures that a \lmplus{} allocation is guaranteed to exist. 

The following theorem then establishes the existential guarantee for \eqx{} allocations in the case of identically-valued chores. 

\begin{restatable}{theorem}{TheoremLeximinEQx}
In a fair division instance if the agents have additive, objective valuations and they value the chores identically, then the \lmplus{} allocation is \eqx.
\label{thm:LeximinIsEQx}
\end{restatable}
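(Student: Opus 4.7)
The plan is to argue by contradiction: assume the \lmplus{} allocation $\mathcal{A} = (A_1, \ldots, A_n)$ is not \eqx{}, and from either type of \eqx{} violation construct an allocation $\mathcal{B}$ satisfying $\mathcal{A} \precplus \mathcal{B}$, contradicting \lmplus{}-maximality. Concretely, if there is a chores violation---agents $i, j$ and a chore $c \in A_i$ with $v_i(A_i \setminus \{c\}) < v_j(A_j)$---I would move $c$ from $A_i$ to $A_j$. If there is a goods violation---agents $i, j$ and a good $g \in A_j$ with $v_i(A_i) < v_j(A_j \setminus \{g\})$---I would move $g$ from $A_j$ to $A_i$. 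In either case let $\mathcal{B}$ denote the resulting allocation, and note that agent $i$ is the poorer of the two involved agents since $v_i(A_i) < v_j(A_j)$ follows from the violation in both cases.

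The technical core is the following claim: both new tuples $(v_i(B_i), |B_i|, i)$ and $(v_j(B_j), |B_j|, j)$ are strictly greater, in the lexicographic order used by $\precplus$, than the old tuple $\alpha_i := (v_i(A_i), |A_i|, i)$. In the chores case, the identical-chores hypothesis gives $v_i(B_i) = v_i(A_i) - v_c$ with $v_c < 0$, so $v_i(B_i) > v_i(A_i)$ strictly, while $v_j(B_j) = v_j(A_j) + v_c$, and the violation $v_i(A_i) - v_c < v_j(A_j)$ rearranges to exactly $v_j(B_j) > v_i(A_i)$. In the goods case, $v_j(B_j) = v_j(A_j) - v_j(g) > v_i(A_i)$ is the violation itself, and $v_i(B_i) = v_i(A_i) + v_i(g) \geq v_i(A_i)$; when $v_i(g) > 0$ the value already strictly increases, and when $v_i(g) = 0$ the bundle-size tiebreaker in $\precplus$ still yields a strict lex increase since $|B_i| = |A_i| + 1$.

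To deduce $\mathcal{A} \precplus \mathcal{B}$, I would directly compare the sorted sequences of tuples $(v_k(\cdot), |\cdot|, k)$ for the two allocations. Agent $j$'s old tuple is already lex-greater than $\alpha_i$ because $v_j(A_j) > v_i(A_i)$. Let $L$ be the number of agents $k \notin \{i, j\}$ whose tuple $(v_k(A_k), |A_k|, k)$ is strictly less than $\alpha_i$. These $L$ tuples are identical in $\mathcal{A}$ and $\mathcal{B}$ and constitute the smallest $L$ entries of both sorted sequences, so the sequences agree on positions $1$ through $L$. At position $L+1$, the $\mathcal{A}$-sequence is $\alpha_i$ itself, whereas in $\mathcal{B}$ every remaining tuple---the unchanged other-agent tuples that are $\geq \alpha_i$ and the two new tuples at $i$ and $j$---strictly exceeds $\alpha_i$ by the claim. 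Thus the $(L+1)$-th entry of $\mathcal{B}$'s sorted sequence strictly exceeds that of $\mathcal{A}$, and the comparison in Algorithm~\ref{alg:Precplus} returns $\mathcal{A} \precplus \mathcal{B}$, the desired contradiction.

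The most delicate point I anticipate is the goods-violation subcase where $v_i(g) = 0$: the raw values alone do not improve, and the argument must rely on the size-based second coordinate that distinguishes $\precplus$ from ordinary leximin. The identical-chores hypothesis plays a parallel role in the chore case, since it guarantees that $v_i$ and $v_j$ move by equal magnitudes when $c$ is transferred, which is precisely what rearranges the violation inequality into $v_j(B_j) > v_i(A_i)$; without this symmetry, the transferred chore could send $v_j(B_j)$ below $v_i(A_i)$ and the above comparison would break down.
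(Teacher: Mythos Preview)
Your proposal is correct and follows essentially the same approach as the paper: argue by contradiction, transfer the violating item, and show the resulting allocation dominates in the \lmplus{} order by examining what happens at agent $i$'s position in the sorted sequence. Your presentation of the final comparison (packaging the observation that both new tuples and all unchanged later tuples strictly exceed $\alpha_i$, then reading off position $L+1$) is slightly more uniform than the paper's explicit case split on the identity of $\sigma^{\mathcal{B}}(\ell)$, but the underlying argument is identical, including the use of the bundle-size tiebreaker when $v_i(g)=0$ and the identical-chores hypothesis to obtain $v_j(B_j) > v_i(A_i)$ in the chore case.
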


\begin{proof}
Write $\alloc$ to denote a \lmplus{} allocation in the given instance; such an allocation necessarily exists (Theorem \ref{thm:LeximinGivesTotalOrder}).
Assume, towards a contradiction, that allocation $\alloc$ is not \eqx. Then, under $\alloc$, we either have a goods violation or a chores violation. 

We consider these two cases separately and show that, in both the cases, allocation $\alloc$ is not \lmplus. In particular, the allocation $\mathcal{B}$ obtained by transferring the violating good or the violating chore satisfies $\alloc \precplus \mathcal{B}$. \\

\noindent
{\it Goods Violation:} Here, let $i$ and $j$ be the agents with the property that  $v_i(A_i) < v_j(A_j \setminus \{ g \})$ for a violating good $g \in A_j$. Now, consider the allocation $\mathcal{B}=(B_1, \ldots, B_n)$ obtained by transferring $g$ to agent $i$. That is, let $B_i = A_i \cup \{g\}$ and $B_j = A_j \setminus \{g\}$; the bundles of all the other agents remain unchanged. Note that $v_i(B_i) \ge v_i(A_i)$ and $|B_i| > |A_i|$, and since $g$ was a violating good, $v_j(B_j) > v_i(A_i)$.

Let $\ell$ be the index of agent $i$ in the permutation $\sigma^\alloc$, i.e., $\sigma^\alloc(\ell) = i$. Note first that the first $(\ell-1)$ agents remain unchanged in $\sigma^\alloc$ and $\sigma^{\mathcal{B}}$ -- these are agents with values, bundle sizes, and indices---in that order---lower than that of $i$. Now, if the $\ell$th agent in permutation $\sigma^{\mathcal{B}}$ is agent $i$ itself, then $\alloc \precplus \mathcal{B}$. This follows from the fact that $v_i(B_i) \geq v_i(A_i)$ and $|B_i| > |A_i|$. In case the $\ell$th agent in $\sigma^{\mathcal{B}}$ is $j$, then again $\alloc \precplus \mathcal{B}$; recall that $v_j(B_j) > v_i(A_i)$. Finally, if the $\ell$th agent, $\sigma^{\mathcal{B}}(\ell)$, is neither $i$ nor $j$, then this agent was placed after agent $i$ in $\sigma^\alloc$. Since the bundle assigned to the agent $\sigma^{\mathcal{B}}(\ell)$ is unchanged, it has either a higher value, more items, or a larger index than $i$. In this case as well, $\alloc \precplus \mathcal{B}$. \\

\noindent
{\it Chores Violation:} Here, let $i$ and $j$ be the agents with the property that $v_i(A_i \setminus \{ c \}) < v_j(A_j)$ for some chore $c \in A_i$. Now, consider the allocation $\mathcal{B}=(B_1, \ldots, B_n)$ obtained by transferring chore $c$ to agent $j$. That is, $B_i = A_i \setminus \{c\}$ and $B_j = A_j \cup \{c\}$. Recall that the common value of the chore $v_c < 0$. Hence, the following strict inequality holds $v_i(B_i) > v_i(A_i)$. Also, since $c$ is a violating chore, we have $v_j(A_j) > v_i(A_i) - v_c$. The last inequality reduces to $v_j(B_j) = v_j(A_j) + v_c > v_i(A_i)$.\footnote{This is where we require that the chores be identical; the value of $c$ does not change when it is transferred from agent $i$ to agent $j$.} 

Again, let $\ell$ be the index of agent $i$ in the permutation $\sigma^\alloc$, i.e., $\sigma^\alloc(\ell) = i$. As before, the first $(\ell-1)$ agents remain unchanged in $\sigma^\alloc$ and $\sigma^{\mathcal{B}}$. If the $\ell$th agent in $\sigma^{\mathcal{B}}$ is agent $i$ itself, then, using the above-mentioned strict inequality, $v_i(B_i) > v_i(A_i)$, we obtain $\alloc \precplus \mathcal{B}$. In case the $\ell$th agent in $\sigma^{\mathcal{B}}$ is $j$, then again $\alloc \precplus \mathcal{B}$, since $v_j(B_j) > v_i(A_i)$. Finally, if the $\ell$th agent is neither $i$ nor $j$, then this agent was placed after agent $i$ in $\sigma^\alloc$. Since the bundle assigned to this agent is unchanged, it has either a higher value, more items, or a larger index than $i$. Overall, we obtain the desired lexicographic dominance, $\alloc \precplus \mathcal{B}$. 
\end{proof}

\subsection{Additive Goods and a Single Chore}
\label{sec:singlechore}

We now show that in instances with additive objective valuations and a single chore $c$, \eqx{} allocations exist. Note first that in the presence of even a single non-identical chore, the \lmplus{} allocation may no longer be \eqx{} (Appendix \ref{appendix:Leximin-Limitation}). Instead, our proof of existence is based on a careful analysis of a version of local search, and keeping track of the movement of the single chore.  

The local search algorithm (Algorithm \ref{alg:SingleChore}) proceeds as follows. Write $\alloc$ to denote the current allocation. Further, let $p$ be the agent that appears first in the permutation $\sigma^\alloc$ (as defined in the previous subsection) and $r$ be the agent that appears last. Note that $p$ and $r$ are a poorest and a richest agent, respectively. If $\alloc$ is not \eqx{}, then there must be either a goods violation or a chores violation. 

\begin{algorithm}[!htp]
\caption{Algorithm for single-chore setting}
  \label{alg:SingleChore}
  \textbf{Input:} Fair division instance $(N,M,\mathcal{V})$ with additive, objective valuations and a single chore $c$\\
  \textbf{Output:} \eqx{} allocation $\alloc$
  \begin{algorithmic}[1]  
  \STATE Initialize $A_1 = M$ and $A_i = \emptyset$ for all agents $i \neq 1$.  
  \WHILE{$\alloc=(A_1, \ldots, A_n)$ is \NOT \eqx{}}
	\STATE $p = \sigma^\alloc(1)$.
	\COMMENT{$p$ is the first agent according to $\sigma^\alloc$.}
	\WHILE {there exists agent $i \in N$ and good $g \in A_i$ such that $v_i(A_i \setminus \{ g \}) > v_p(A_p)$} \label{line:SCGVBegin}
		\STATE {Update $A_p \gets A_p \cup \{g\}$ and $A_i \gets A_i \setminus \{g\}$}.
		\STATE Update $p = \sigma^\alloc(1)$.
	\ENDWHILE \label{line:SCGVEnd}
	\COMMENT{After resolving all goods violations, check for chores violation.}
	\STATE $r \gets \sigma^\alloc(n)$.
	\COMMENT{$r$ is the last agent according to $\sigma^\alloc$.}
	\IF {$v^+(\alloc) < v_r(A_r)$} \label{line:CheckChoresViolation}
		\STATE {Update $A_{\kappa(\alloc)} \gets A_{\kappa(\alloc)} \setminus \{c\}$ and $A_r \gets A_r \cup \{c\}$}.
	\ENDIF
  \ENDWHILE
   \RETURN $\alloc$
  \end{algorithmic}
\end{algorithm}

If there is a goods violation in $\alloc$, then there exists an agent $i$ and good $g \in A_i$ with the property that $v_i(A_i \setminus \{g\}) > v_p(A_p)$. In our algorithm, we resolve the goods violation by transferring good $g$ to agent $p$. The algorithm resolves all goods violations, before addressing the chores violation. Now, if there is a chores violation, then for some some agent $i$ and the chore $c \in A_i$ it holds that $v_i(A_i \setminus \{ c \}) < v_r(A_r)$. The algorithm resolves the chores violation by transferring chore $c$ to agent richest agent $r$. The algorithm repeats these steps---resolving all goods violations and then the chores violation---until there are no more violations.  

By design, the algorithm terminates only when the maintained allocation is \eqx{}. To prove that the algorithm indeed terminates, we need the following notation. For an allocation $\alloc$, we denote by $\kappa(\alloc)$ as the agent holding the unique chore $c$, i.e., $c \in A_{\kappa(\alloc)}$. We call the value $v^+(\alloc) := v_{\kappa(\alloc)}(A_{\kappa(\alloc)} \setminus c)$ as the \emph{cutoff value} of the allocation $\alloc$ and of the agent ${\kappa(\alloc)}$. 

In case of a chores violation we have $v_{\kappa(\alloc)}(A_{\kappa(\alloc)} \setminus c) < v_r(A_r)$. Write $\mathcal{B}$ to denote the allocation after transferring chore $c$ to agent $r$. Note that such a chore transfer increases the cutoff value: under allocation $\alloc$, the cutoff value is $v^+(\alloc) = v_{\kappa(\alloc)}(A_{\kappa(\alloc)} \setminus c)$ and, under the updated allocation $\mathcal{B}$, it is $v^+(\mathcal{B}) = v_r(A_r)$. In particular, $v^+(\alloc) < v^+(\mathcal{B})$. In our analysis, we will keep track of two progress measures separately: the lexicographic value of the allocation (according to the order $\precplus$ defined in Section \ref{sec:IdenticalChores}) and the cutoff value of the allocation. We will show, in particular, that the cutoff value of the allocation is nondecreasing between successive chores violations, and strictly increases whenever a chores violation is resolved. Whenever a goods violation is resolved, we obtain a lexicographic improvement in the allocation; a chores violation may however result in a lexicographic decrease. Since both the cutoff value and the lexicographic value can only increase a finite number of times, the local search algorithm must terminate in finite time. Formally,  

\begin{restatable}{theorem}{TheoremSingleChore}
Given any fair division instance with additive objective valuations and a single chore, Algorithm~\ref{alg:SingleChore} terminates in finite time and returns an \eqx{} allocation.
\label{thm:SingleChore}
\end{restatable}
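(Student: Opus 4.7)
The plan is to establish correctness and termination separately. Correctness is immediate: the outer while loop exits only when the current allocation is \eqx{}, so upon termination the returned allocation is \eqx{}. For termination I would combine two progress measures: the lexicographic comparison $\precplus$ from Section~\ref{sec:IdenticalChores} and the cutoff value $v^+(\alloc)$. The proof would establish three facts: (a)~every resolution of a goods violation strictly improves the current allocation under $\precplus$; (b)~every resolution of a chores violation strictly increases $v^+$; and (c)~throughout any maximal interval of iterations lying between two successive chores-violation resolutions, the cutoff value $v^+$ is nondecreasing. Combining (b) and (c) produces a strictly increasing subsequence of integer cutoff values, which, being bounded, forces the number of chores-violation resolutions to be finite. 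Within each such interval (and also before the first chores-violation resolution), (a) together with the total-order property of $\precplus$ from Theorem~\ref{thm:LeximinGivesTotalOrder} bounds the number of goods-violation resolutions.

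Fact (a) essentially repeats the goods-violation argument in the proof of Theorem~\ref{thm:LeximinIsEQx}: transferring the violating good $g$ from some agent $i$ to the poorest agent $p^* = \sigma^\alloc(1)$ strictly improves the value or the size at the first slot of the permutation $\sigma^\alloc$, yielding $\alloc \precplus \mathcal{B}$. Fact (b) is immediate from the algorithm's chores-violation test $v^+(\alloc) < v_r(A_r)$: after transferring the chore to $r$, the new chore holder is $r$ and the new cutoff equals $v_r(A_r)$, a strict increase.

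The main obstacle is fact (c). I would fix an interval starting immediately after the $k$-th chores-violation resolution, during which the chore sits on the agent $r_k$ who was the richest at that moment. My plan is to maintain the invariant that, at every point in the interval and for every good $g \in A_{r_k}$, $v_{r_k}(A_{r_k} \setminus \{g\}) \leq v_{p^*}(A_{p^*})$, where $p^*$ is the current poorest agent. This invariant precludes $r_k$ from ever being the \emph{giver} in a goods-violation resolution of the interval, so $A_{r_k}$ can only grow (when $r_k$ itself is the current poorest and therefore receives a good), which keeps $v^+(\alloc) = v_{r_k}(A_{r_k}) + |v_{r_k}(c)|$ nondecreasing. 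For the base case I would use that the allocation $\alloc^{(k)}$ just before the $k$-th chores-violation resolution is goods-violation-free, which gives $v_{r_k}(A^{(k)}_{r_k} \setminus \{g\}) \leq v_{p_k}(A^{(k)}_{p_k})$ for every good $g \in A^{(k)}_{r_k}$; a short case analysis on whether the post-resolution poorest is $p_k$ (whose value is unchanged, so the bound transfers) or $r_k$ itself (in which case the invariant holds because removing a good cannot increase value) then establishes the invariant. For the inductive step I would use the fact that the minimum bundle value across agents is nondecreasing under any goods-violation resolution---the receiver's value grows, the giver's new value strictly exceeds the previous poorest value by the violation condition, and every other bundle is unchanged---so the right-hand side of the invariant can only grow while the left-hand side is unaffected, since the invariant itself forbids $r_k$ from being the giver.
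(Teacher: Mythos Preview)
Your decomposition into facts (a), (b), (c) matches the paper's strategy, and your arguments for (a) and (b) are correct. The gap is in your inductive step for (c).

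You write that ``the left-hand side is unaffected, since the invariant itself forbids $r_k$ from being the giver.'' But you have not ruled out $r_k$ being the \emph{receiver}. When $r_k$ is the current poorest and receives a good $g'$, the bundle $A_{r_k}$ changes, and for every pre-existing good $g \in A_{r_k}$ the quantity $v_{r_k}(A_{r_k} \setminus \{g\})$ increases by $v_{r_k}(g')$, which can exceed the increase in the minimum value. Concretely: suppose that right after a chores resolution $r_k$ holds $\{g_1, c\}$ with $v_{r_k}(g_1) = 10$, $v_{r_k}(c) = -20$ (value $-10$, poorest), while two other agents hold singleton good bundles of values $5$ and $3$. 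Your invariant holds ($v_{r_k}(\{c\}) = -20 \le -10$). A goods violation now transfers some $g_2$ to $r_k$; take $v_{r_k}(g_2) = 100$ and let the donor's bundle become empty, so the new minimum is $0$. Then $v_{r_k}(\{g_2,c\}) = 80 > 0$: the invariant fails for $g = g_1$, and $r_k$ is a legitimate giver at the very next step. Hence your conclusion that $A_{r_k}$ only grows is false.

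The paper does not try to maintain this invariant throughout the interval. It uses it only at the \emph{start} (your base case) to argue that the \emph{first} modification to $r_k$'s bundle in the inner loop, if any, must be an addition: as long as $A_{r_k}$ is unchanged, removing any good drops $r_k$'s value to at most the initial minimum, hence to at most the current (nondecreasing) minimum, so $r_k$ cannot yet be a giver. At that first addition $r_k$ is poorest with its original bundle, so $v_{r_k}(A_{r_k})$ equals the minimum at that moment; since the minimum is nondecreasing and always lower-bounds $v_{r_k}$, this yields $v_{r_k}(B_{r_k}) \ge v_{r_k}(A_{r_k})$ at the end of the interval (and if $r_k$ is already poorest immediately after the chores resolution, the same bound follows directly). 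This argument allows $r_k$ to later give away goods---as in the example above---while still guaranteeing that $v^+$ does not decrease over the interval.
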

\begin{proof}
The outer while-loop terminates only when the current allocation is \eqx{}. Hence, if the algorithm terminates, it must return an \eqx{} allocation. It remains to show that that the algorithm terminates in finite time.

Firstly, consider the inner while-loop, which resolves goods violations. It follows from the proof of Theorem~\ref{thm:LeximinIsEQx} that resolving a goods violation leads to a lexicographic improvement. That is, if $\alloc$ is the allocation before the inner while-loop executes and $\mathcal{B}$ is the allocation after the inner while-loop terminates, then $\alloc \precplus \mathcal{B}$. In particular, $\min_i v_i(A_i) \le \min_i v_i(B_i)$.

Now consider two successive executions of Line~\ref{line:CheckChoresViolation}. Let $\alloc'$ be the allocation just before the first execution of Line~\ref{line:CheckChoresViolation}, $\alloc$ be the allocation after the chores violation is resolved, and $\mathcal{B}$ be the allocation just before the second execution of Line~\ref{line:CheckChoresViolation} (i.e., after the inner while loop terminates). We claim that $v^+(\alloc') < v^+(\mathcal{B})$.

To prove the claim, we first show that $v^+(\alloc')  < v^+(\alloc)$. Towards this, note that if there is a chores violation in $\alloc'$, then the chore is moved from agent ${\kappa(\alloc')}$ to agent ${\kappa(\alloc)}$. Further, $v_{\kappa(\alloc')}(A'_{\kappa(\alloc')}\setminus \{  c \}) < v_{\kappa(\alloc)}(A'_{\kappa(\alloc)})$, since $\kappa(\alloc)$ is agent $r$ in $\alloc'$. Then, $v^+(\alloc')  < v^+(\alloc)$ by definition of the cutoff value.

We now show that $v^+(\alloc) \le v^+(\mathcal{B})$. Write $k := \kappa(\alloc)$, and note that agent $k$ holds the chore in $\alloc$, $\mathcal{B}$, and all the  interim allocations during the execution of the inner while loop. We consider two cases. 

In the simpler first case, agent $k$ is a poorest agent in $\alloc$, after receiving the chore. Then, as mentioned earlier, $\alloc \precplus \mathcal{B}$ and, in particular, the minimum value is nondecreasing. Hence, $v_k(A_k) \le v_k(B_k)$. Since the cutoff value is obtained by removing the chore from $A_k$ and $B_k$, it follows that $v^+(\alloc) \le v^+(\mathcal{B})$.

In the second case, agent $k$ is not a poorest agent in $\alloc$. Since the previous allocation $\alloc'$ had no goods violations, removal of any good $g \in A_k'$ made $k$ the poorest agent in $\alloc'$. This property continues to hold after the chore is transferred to agent $k$. That is, $v_k(A_k \setminus \{ g \}) \le v_i(A_i)$ for each good $g \in A_k$ and all agents $i \neq k$. 

Let $T$ denote the number of iterations of the inner while-loop when it updates allocation $\alloc$ to allocation $\mathcal{B}$. Write $\alloc^i$ to denote the allocation after the $i$th execution of the inner while loop. Then $\alloc = \alloc^0$, and $\mathcal{B} = \alloc^T$.

Let $t$ be the first iteration of the inner while loop where a good is removed from agent $k$'s bundle (this is a necessary condition for agent $k$'s value to decrease). However as we have noted, in allocation $\alloc$, removing any good from agent $k$'s bundle made it the poorest agent. If agent $k$'s bundle is unchanged prior to this execution, i.e., $A_k^{t-1} = A_k$, then since the minimum value is nondecreasing, removing any good from agent $k$'s bundle would still make it the poorest agent. Hence, we get that $A_k^{t-1} \supsetneq A_k$, and there must exist some $t' < t$ such that a good was added to agent $k$'s bundle in execution $t'$. 

Now let $t'$ be the first execution of the inner while loop where a good is added to agent $k$'s bundle. For a good to be added to agent $k$'s bundle, agent $k$ must be the poorest agent prior to the addition, i.e., $v_k(A_k^{t'-1}) = \min_i v_i(A_i^{t'-1})$. Note that this is the first modification of agent $k$'s bundle by the inner while loop, hence $A_k^{t'-1} = A_k$. But we know that the minimum value is nondecreasing in the inner while loop. Hence, we obtain that
\begin{align*}
v_k(B_k) \ge \min_i v_i(B_i) \ge \min_i v_i(A_i^{t'-1}) \\
	\qquad = v_k(A_k^{t'-1}) = v_k(A_k).
\end{align*}

This completes the proof of the claim $v^+(\alloc) \le v^+(\mathcal{B})$ since agent $k$ holds the chore in allocations $\alloc$ and $\mathcal{B}$. 

Therefore, we obtain that between two successive executions of Line~\ref{line:CheckChoresViolation}, the cutoff value of the allocation must strictly increase. This can clearly happen a finite number of times. After the last execution of Line~\ref{line:CheckChoresViolation}, after resolving goods violations in the inner while-loop (which must terminate, since lexicographic improvements can only occur a finite number of times), there cannot be any chores violations, and the outer while-loop must terminate and return an \eqx{} allocation.
\end{proof}

\section{Extensions}  
\label{sec:Extensions}

In this section, we show that all our positive results hold even if we replace goods with chores and vice versa. We skip formal proofs for these results. Instead, we outline the main differences encountered while establishing the converse results.

\subsection{Monotone valuations}

The results from Section~\ref{sec:Monotone} hold if the agents have \emph{monotone nonincreasing} valuations, instead of monotone nondecreasing as earlier. That is, our results hold if all items are chores, instead of goods. In this case, Algorithm~\ref{alg:AddandFix2} requires that instead of the poorest and second-poorest agents $p$ and $p'$, we consider the richest and second-richest agents $r$ and $r'$. The richest agent $r$ then selects the unallocated chore with the smallest marginal value, i.e., the item which decreases its value by the largest amount. The Add phase adds chores in this manner to agent $r$ as long as it remains a richest agent. The Fix phase removes chores from $A_r$ that, upon removal, do not make it a richest agent. With these changes, the algorithm computes an \eqx{} allocation in pseudo-polynomial time.

For polynomial time computation, instead of weakly well-layered valuations, we require that the agents' valuations be \emph{negatively weakly well-layered}:

\begin{definition}
\label{definition:negatively-well-layered}
    A valuation function $v:2^M \rightarrow \mathbb{Z}_{\le 0}$ is said to be \emph{negatively weakly well-layered} if for any set $M' \subseteq M$ the sets $S_0$, $S_1$, $\ldots$ obtained by the greedy algorithm (that is, $S_0 = \emptyset$ and $S_i = S_{i-1} \cup \{x_i\}$, where $x_i \in \argmin_{x \in M' \setminus S_{i-1}} v(S_{i-1} \cup x)$, for $i \le |M'|$) are optimal, in the sense that $v(S_i) = \min_{S \subseteq M': |S| = i} v(S)$ for all $i \le |M'|$.
\end{definition}

With these valuations, one can prove that Algorithm~\ref{alg:AddandFix2} indeed runs in polynomial time.

Lastly, for monotone nonincreasing valuations, given parameter $\varepsilon \in [0,1]$, an allocation $\alloc$ is said to be an $(1 + \varepsilon)$-\eqx{} allocation if for every pair of agents $i, j \in N$ and for each chore $c \in A_i$ we have $\ v_i(A_i \setminus \{ c \}) \geq (1+ \varepsilon) v_j(A_j)$. Hence, in an $(1+ \varepsilon)$-\eqx{} allocation, removing any chore from any agent $i$'s bundle improves $i$'s value to at least $(1+ \varepsilon)$ times the maximum (recall that all values are nonpositive).

We now modify Algorithm~\ref{alg:AddandFix2} as follows:

\noindent \ref{line:AddCondition}: \ \textbf{while} $v_r(A_r) \geq ( 1+ \varepsilon) v_{r'}(A_{r'})$ \textbf{and} $U \neq \emptyset$ \textbf{do}...

\noindent \ref{line:FixCondition}: \ \textbf{while} there exists $\widehat{c} \in A_r$ such that $v_r(A_r \setminus \{ \widehat{c} \}) < ( 1+ \varepsilon) v_{r'}(A_{r'})$ \textbf{do}...

One can prove that this modified algorithm returns a $ ( 1+ \varepsilon)$-\eqx{} allocation in ${O\left(\frac{m^2n}{\varepsilon} \log | V_{\min} |\right)}$ time, where $V_{\min} := \min_{i \in N} v_i(M)$.

\subsection{Additive nonmonotone valuations}

\paragraph*{Identically-valued goods.} We show that the positive results from Section~\ref{sec:IdenticalChores} can be obtained if we have identically-valued goods instead. In this case, agents have additive, objective valuations, and they value the goods $g$ identically, i.e., $v_i(g) = v_j(g)$ for all agents $i,j \in N$. 

In the earlier case when agents valued \emph{chores} identically, for an allocation $\mathcal{X} = (X_1, \ldots, X_n)$ we ordered agents by increasing value $v_i(X_i)$. In this case, when agents value \emph{goods} identically, we order agents by \emph{increasing negation of their values} $-v_i(X_i)$ (hence, the agent with largest value appears first in this order). Thus the permutation $\sigma^\mathcal{X}$ is defined as:
\begin{itemize}
\item[(i)] Agents with lower negated values, $-v_i(X_i)$, receive lower indices.
\item[(ii)] Among agents with equal values, agents $i$ with lower bundle sizes $|X_i|$ receive lower indices.
\item[(iii)] Agents with equal values and number of items are ordered by the index $i$. 
\end{itemize}

Then for allocations $\mathcal{A}$, $\mathcal{B}$, we say that $\mathcal{A} \precplus \mathcal{B}$ if, for the first index $\ell$ where they differ, if $i = \sigma^\mathcal{A}(\ell)$ and $j =  \sigma^\mathcal{B}(\ell)$, 
\begin{itemize}
\item[(i)] either $-v_i(A_i) < -v_j(B_j)$, 
\item[(ii)] or $-v_i(A_i) = -v_j(B_j)$ and $|A_i| < |B_j|$,
\item[(iii)] or $-v_i(A_i) = -v_j(B_j)$, $|A_i| = |B_j|$, and $i < j$.
\end{itemize}

Thus, an allocation $\alloc$ is a \lmplus{} allocation if $\mathcal{B} \precplus \alloc$ for all allocations $\mathcal{B}$. By Theorem \ref{thm:LeximinGivesTotalOrder}, we know that \lmplus{} allocation is guaranteed to exist. 

As in the earlier case, we can complete the proof for the current setting by using the observation that if an allocation  $\alloc$ is not EQx then resolving the violation gives a lexicographic improvement.

 \paragraph*{Additive chores and a single good.} Algorithm~\ref{alg:SingleChore} is modified in the natural manner, replacing goods with chores and vice versa.

\begin{algorithm}[!htp]
\caption{Algorithm for single good setting}
  \label{alg:SingleGood}
  \textbf{Input:} Fair division instance $(N,M,\mathcal{V})$ with additive, objective valuations and a single good $g$\\
  \textbf{Output:} \eqx{} allocation $\alloc$
  \begin{algorithmic}[1] 
  \STATE Initialize $A_1 = M$ and $A_i = \emptyset$ for all agents $i \neq 1$.  
  \WHILE{$\alloc=(A_1, \ldots, A_n)$ is \NOT \eqx{}}
	\STATE $r = \sigma^\alloc(1)$.
	\COMMENT{$r$ is the first agent according to $\sigma^\alloc$.}
	\WHILE {there exists agent $i \in N$ and chore $c \in A_i$ such that $v_i(A_i \setminus \{ c \}) < v_r(A_r)$} \label{line:SCGVBegin}
		\STATE {Update $A_r \gets A_r \cup \{c\}$ and $A_i \gets A_i \setminus \{c\}$}.
		\STATE Update $r = \sigma^\alloc(1)$.
	\ENDWHILE \label{line:SCGVEnd}
	\COMMENT{After resolving all chores violations, check for goods violation.}
	\STATE $p \gets \sigma^\alloc(n)$.
	\COMMENT{$p$ is the last agent according to $\sigma^\alloc$.}
	\IF {$v^+(\alloc) > v_p(A_p)$} \label{line:CheckGoodsViolation}
		\STATE {Update $A_{\kappa(\alloc)} \gets A_{\kappa(\alloc)} \setminus \{g\}$ and $A_p \gets A_p \cup \{g\}$}.
	\ENDIF
  \ENDWHILE
   \RETURN $\alloc$
  \end{algorithmic}
\end{algorithm}

As before, in an allocation $\alloc$, $\kappa(\alloc)$ is the agent holding the unique \emph{good} $g$, and the value $v^+(\alloc) := v_{\kappa(\alloc)}(A_{\kappa(\alloc)} \setminus g)$ is the \emph{cutoff value}. In case of a goods violation we have $v_{\kappa(\alloc)}(A_{\kappa(\alloc)} \setminus g) > v_p(A_p)$. Let $\mathcal{B}$ denote the allocation after transferring good $g$ to agent $p$. Note that such a good transfer decreases the cutoff value: under allocation $\alloc$, the cutoff value is $v^+(\alloc) = v_{\kappa(\alloc)}(A_{\kappa(\alloc)} \setminus g)$ and, under the updated allocation $\mathcal{B}$, it is $v^+(\mathcal{B}) = v_p(A_p)$. In particular, $v^+(\alloc) > v^+(\mathcal{B})$. As before, in our analysis, we will keep track of two progress measures separately: the lexicographic value of the allocation (according to the order $\precplus$ defined for identically-valued goods and the cutoff value of the allocation. We can show that the cutoff value of the allocation is nonincreasing between successive goods violations, and strictly decreases whenever a goods violation is resolved. Whenever a chores violation is resolved, we obtain a lexicographic improvement in the allocation; a goods violation may however result in a lexicographic decrease. Since the cutoff value can decrease and the lexicographic value can increase only a finite number of times, the local search algorithm must terminate in finite time. 

\section{Conclusion and Future Work}
Our work resolves fundamental questions regarding the existence of \eqx{} allocations. We present sweeping positive results when all the items are goods \emph{or} all items are chores; this includes universal existence of \eqx{} allocations under general, monotone valuations and an accompanying pseudo-polynomial time algorithm. For monotone valuations, we also provide a fully polynomial-time approximation scheme (FPTAS) for finding approximately \eqx{} allocations. In addition, we show that under weakly well-layered valuations \eqx{} allocations can be computed efficiently.  
 
For mixed items (goods and chores), we show that \eqx{} allocations may not exist, and show both hardness results and polynomial time algorithms for determining the existence of an \eqx{} allocation. For additively-valued goods and chores, our results present a mixed picture: existence and efficient computation for two agents, and existence either when each chore or each good has the same value among the agents, or if there is a single chore or a single good.  

A number of significant open questions remain. First, the existence of \eqx{} allocations under objective, additive valuations remains unresolved. Second, efficient algorithms for computing \eqx{} allocations in this case (or even for goods and a single chore, or identical chores) appear challenging. Lastly, given the practical significance, truthful mechanisms for obtaining \eqx{} allocations may prove useful.

\paragraph*{Acknowledgments.} We thank Rohit Vaish for useful discussions, and suggestions regarding this paper.

\bibliographystyle{alpha}
\bibliography{References}

\appendix
\section{Missing Proofs from Section \ref{section:monotone-apx}}
\label{appendix:MonApx}

In this section, we first state and prove Claim \ref{claim:AddandFixApproximation}. The claim is then used to establish Theorem \ref{thm:MonotoneApproximation}. 

\begin{restatable}{claim}{ClaimAddFixApx}
After every outer iteration, either (i) $p$ is no longer the poorest agent, and the value of agent $p$ has increased multiplicatively by at least a factor $\left(\frac{1}{1-\varepsilon}\right)$, or (ii) all remaining unassigned goods are assigned to agent $p$ and the algorithm terminates.
\label{claim:AddandFixApproximation}
\end{restatable}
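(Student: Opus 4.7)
The proof plan closely parallels that of Claim~\ref{claim:AddandFixIteration}, but replaces the strict inequality on values with the $(1-\varepsilon)$-weighted inequality enforced by the modified guard conditions. I would fix an outer iteration and let $\alloc=(A_1,\ldots,A_n)$ be the allocation at its start. Since $p$ was chosen as a poorest agent and $p'$ as a second-poorest one, we have $v_p(A_p)\le v_{p'}(A_{p'})$, and $v_{p'}(A_{p'})$ is unchanged throughout the iteration (as only $A_p$ is modified).

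The Add phase terminates the first time one of its two guard clauses fails. I would split on which clause failed. In the first subcase, the set $U$ becomes empty; so every remaining unassigned good has been appended to $A_p$. Since $v_p$ is monotone nondecreasing, for every $\widehat{g}\in A_p$ we have $v_p(A_p\setminus\{\widehat{g}\})\le v_p(A_p)$, and hence $(1-\varepsilon)v_p(A_p\setminus\{\widehat{g}\})\le (1-\varepsilon)v_p(A_p)\le v_{p'}(A_{p'})$, where the last inequality holds because the Add-phase guard was still satisfied when $U$ emptied. The Fix phase therefore performs no removal, and the outer while-loop terminates because $U=\emptyset$. This is exactly case (ii) of the claim.

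In the remaining subcase, the Add phase terminates with $(1-\varepsilon)v_p(A_p)>v_{p'}(A_{p'})$. The key step is to maintain, as a loop invariant of the Fix phase, the inequality $(1-\varepsilon)v_p(A_p)>v_{p'}(A_{p'})$. Initially this holds by assumption; at a generic step, if $\widehat{g}\in A_p$ is removed, then the Fix-phase guard says $(1-\varepsilon)v_p(A_p\setminus\{\widehat{g}\})>v_{p'}(A_{p'})$, so the invariant is preserved on the updated bundle. Hence at termination of the Fix phase,
\[
v_p(A_p^{\mathrm{new}})\;>\;\frac{v_{p'}(A_{p'})}{1-\varepsilon}\;\ge\;\frac{v_p(A_p^{\mathrm{old}})}{1-\varepsilon},
\]
where the second inequality uses $v_{p'}(A_{p'})\ge v_p(A_p^{\mathrm{old}})$ together with $\frac{1}{1-\varepsilon}\ge 1$. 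This supplies the multiplicative factor $\frac{1}{1-\varepsilon}$ required by case (i). Finally, to confirm that $p$ is no longer a poorest agent, I would observe that $v_p(A_p^{\mathrm{new}})>\frac{v_{p'}(A_{p'})}{1-\varepsilon}\ge v_{p'}(A_{p'})$ when $v_{p'}(A_{p'})>0$, and that when $v_{p'}(A_{p'})=0$ the invariant forces $v_p(A_p^{\mathrm{new}})>0=v_{p'}(A_{p'})$; in either event $v_p(A_p^{\mathrm{new}})>\min_{q\ne p}v_q(A_q)$, as desired.

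No step looks genuinely hard; the only mild subtlety is the first subcase above, where one must rule out a spurious Fix-phase execution when $U$ empties while the Add-phase inequality is still slack. I would handle that, as outlined, through a single appeal to monotonicity of $v_p$.
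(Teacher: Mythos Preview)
Your proposal is correct and follows essentially the same approach as the paper's proof: both split on whether the Add phase exits with $(1-\varepsilon)v_p(A_p)>v_{p'}(A_{p'})$, argue that the Fix phase preserves this inequality as an invariant, and conclude the multiplicative increase from $v_{p'}(A_{p'})\ge v_p(A_p^{\mathrm{old}})$. Your treatment is in fact slightly more explicit than the paper's in two places---you spell out via monotonicity why the Fix phase does not execute in the terminating case, and you separately handle $v_{p'}(A_{p'})=0$ when verifying that $p$ ceases to be poorest---but the underlying argument is identical.
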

\begin{proof}
At the start of the outer iteration, $v_p(A_p) \le v_{p'}(A_{p'})$. The Add phase terminates only when either all remaining goods are assigned to agent $p$, or $v_p(A_p) > \left(\frac{1}{1-\varepsilon} \right) v_{p'}(A_{p'})$. Hence, at the end of the Add phase, either $v_p(A_p) > \left(\frac{1}{1-\varepsilon} \right) v_{p'}(A_{p'})$, or all remaining goods are assigned to agent $p$ and $v_p(A_p) \le \left(\frac{1}{1-\varepsilon} \right) v_{p'}(A_{p'})$. In the latter case, the Fix phase is not executed, and the algorithm terminates as claimed. In the former case, the Fix phase may be executed. Note, however, that a good $\widehat{g}$ is removed from agent $p$ only if $v_p \left(A_p \setminus \{ \widehat{g} \} \right) > \left(\frac{1}{1-\varepsilon} \right)  v_{p'}(A_{p'})$. Hence, after termination of the Fix phase, we must have that $v_p(A_p) > \left(\frac{1}{1-\varepsilon} \right) v_{p'}(A_{p'})$. At this point, the value of agent $p$ has strictly increased to above that of agent $p'$, and by a factor of $\left(\frac{1}{1-\varepsilon} \right)$, as claimed.
\end{proof}

\TheoremMonApx*
\begin{proof}
For the time complexity, note that, from Claim~\ref{claim:AddandFixApproximation}, the value of an agent $i$ can increase at most $\log_{\left( \frac{1}{1-\varepsilon}\right)} v_i(M) \le \frac{\log V_{\max}}{\varepsilon}$ times. Hence, the number of outer iterations is at most $O \left((n \log V_{\max})/ \varepsilon \right)$. As noted in the proof of Theorem~\ref{thm:Monotone}, the runtime of each outer iteration is $O(m^2)$. The algorithm thus terminates in $O \left(\frac{m^2n}{\varepsilon}\log V_{\max} \right)$ time. 

To prove that the allocation computed by the algorithm is indeed $(1- \varepsilon)$-\eqx{}, we proceed as before by induction on the number of outer iterations. Initially, the allocation is empty, which is trivially \eqx{} (and, hence, $(1-\varepsilon)$-\eqx{}). Since the bundles assigned to agents, other than $p$, remain unchanged in an outer iteration, any $(1-\varepsilon)$-\eqx{} violation must involve agent $p$. Let $\mathcal{B}$ be the allocation obtained after an outer iteration. Then $v_i(B_i) = v_i(A_i)$ for $i \neq p$. Also, Claim~\ref{claim:AddandFixApproximation} ensures that $v_p(B_p) \ge v_p(A_p)$.

To show that allocation $\mathcal{B}=(B_1, \ldots, B_n)$ is $(1-\varepsilon)$-\eqx{}, we need to show that for any agent $i \in N$, the removal of any good $g \in B_i$ 
reduces the value of the bundle to at most $\frac{1}{1-\varepsilon}$ times the poorest agent. This condition holds---via the induction hypothesis---for all agents $i \neq p$; recall that $B_i = A_i$, for all $i \neq p$, and $v_p(B_p) \ge v_p(A_p)$. 

For agent $p$, note that after the completion of the Fix phase, the removal of any good from $p$'s bundle reduces its value to at most $\left(\frac{1}{1-\varepsilon} \right) v_{p'}(A_{p'}) = \left(\frac{1}{1-\varepsilon} \right) v_{p'}(B_{p'})$. Since $p$ was the poorest and $p'$ was the second poorest agent in allocation $\alloc$, this implies that the removal of any good from $B_p$ brings down $p$'s value to below $\frac{1}{1- \varepsilon}$ times the minimum: $v_p(B_p \setminus \{ g\}) \leq \left(\frac{1}{1-\varepsilon} \right) v_j(B_j)$ for all agents $j$ and each good $g \in B_p$. 

The theorem stands proved. 
\end{proof}

\section{Limitation of Leximin++ under Nonmonotone Valuations}
\label{appendix:Leximin-Limitation}

This section shows that, even in the presence of a single, non-identically valued chore, leximin allocations are not guaranteed to be \eqx{}. Recall that, by contrast and for identical chores, a \lmplus{} allocation is always \eqx{} (Theorem \ref{thm:LeximinIsEQx}). 

The following instance highlights the limitation of the \lmplus{} criterion under nonmonotone valuations. In particular, Table \ref{table:LeximinExample} that details the additive, objective valuations of two agents for three items: two goods, $g_1$, $g_2$ and a chore $c$. 

\begin{table}[!ht]
\centering
    \begin{tabular}{|c|c|c|c|}
    \hline
     & $g_1$ & $g_2$ & $c$ \\ \hline
     Agent 1 & $10$ & $1$ & $-1$ \\ \hline
     Agent 2 & $1$ & $100$ & $-1000$ \\ \hline     
    \end{tabular}
    \caption{Here, the \lmplus{} allocation is not \eqx{}.} 
    \label{table:LeximinExample}
\end{table}
In the given instance, consider bundles $A_1 = \{g_1, c\}$ and $A_2=\{g_2\}$. Under allocation $\alloc = (A_1, A_2)$, we have $v_1(A_1) = 9$ and $v_2(A_2) = 100$. Furthermore, in any other allocation, the minimum value among the two agents (i.e., the egalitarian welfare) is non-positive. Hence, for the instance at hand, $\alloc$ is the only \lmplus{} allocation.  

However, a chores violation exists under $\alloc$, since $v_2(A_1 \setminus \{c \}) = 10 < v_2(A_2)$. Therefore, allocation $\alloc$ is not \eqx{}. 
\end{document}